\documentclass[11pt]{article}
\usepackage[T1]{fontenc}
\usepackage[utf8]{inputenc}
\usepackage[english]{babel}
\usepackage{amssymb}
\usepackage{amsmath}
\usepackage{amsthm}
\usepackage{mathtools}
\usepackage{booktabs}
\usepackage{enumitem}
\usepackage{microtype}
\usepackage{needspace}
\usepackage[margin=1in]{geometry}
\usepackage[hidelinks]{hyperref}
\usepackage{color}

\newtheorem{theorem}{Theorem}
\newtheorem{definition}{Definition}

\newtheorem{lemma}[theorem]{Lemma}

\DeclareMathOperator*{\argmax}{arg\,max}

\newcommand{\cX}{\mathcal{X}}
\newcommand{\cY}{\mathcal{Y}}
\newcommand{\cU}{\mathcal{U}}
\newcommand{\cC}{\mathcal{C}}
\newcommand{\cP}{\mathcal{P}}

\newcommand{\cF}{\mathcal{F}}
\newcommand{\cE}{\mathcal{E}}
\newcommand{\T}{\mathsf{T}}
\newcommand{\E}{\mathbb{E}}
\newcommand{\Pp}{\mathbb{P}}
\newcommand{\1}{\mathbf{1}}
\newcommand{\wt}{\widetilde}
\newcommand{\wh}{\widehat}
\newcommand{\bu}{\boldsymbol{u}}
\newcommand{\bU}{\boldsymbol{U}}
\newcommand{\bX}{\boldsymbol{X}}

\newcommand{\bx}{\boldsymbol{x}}
\newcommand{\by}{\boldsymbol{y}}
\newcommand{\bz}{\boldsymbol{z}}
\newcommand{\bxbar}{\overline{\boldsymbol{x}}}
\newcommand{\Pe}{P_{\mathrm e}}

\newcommand{\Ilm}{I_{\mathrm{LM}}}
\newcommand{\Clm}{C_{\mathrm{LM}}}
\newcommand{\Cm}{C_{\mathrm{M}}}
\newcommand{\pos}{\operatorname{pos}}

\usepackage{amsmath}
\usepackage{graphicx}

\newcommand{\wc}[1]{%
	\overset{\smash{\rotatebox[origin=c]{180}{$\widehat{\phantom{#1}}$}}}{#1}%
}

\title{Diagnosing the Refuted Mismatched Decoding \\ Converse for Binary-Input Channels}
\author{Jonathan Scarlett \medskip \\ National University of Singapore}
\date{}

\begin{document}

\maketitle

\begin{abstract}
	We revisit the claimed converse theorem for mismatched decoding over binary-input discrete memoryless channels (Balakirsky, 1995) and the subsequent work refuting this converse via a numerical counter-example (Scarlett, Somekh-Baruch, Martinez, and Guill\'en i F\`abregas, 2015).  A notable gap in the existing understanding is that no concrete flaw was identified in Balakirsky's analysis. In this paper, Balakirsky's proof is ``diagnosed'', and it is demonstrated that (i) his \emph{permutation lemma} is flawed regardless of the correctness of his \emph{combinatorial approximation lemma}; (ii) there are at least two specific incorrect steps regarding restricted codes and selector sequences, and these appear to be unlikely to admit a local repair; and (iii) the proposition in which these incorrect steps are used is false in general.
	
	This work is AI-assisted, and a full declaration is given at the end of the paper.
\end{abstract}

\section{Introduction}

The mismatched decoding problem concerns the goal of reliable communication when the decoding rule is fixed and possibly suboptimal \cite{ScarlettEtAl2020}.  This is a fundamental problem in information theory that is not only of interest in its own right, but also recovers notions such as zero-error capacity and zero-undetected error capacity as special cases. 
In the early literature on mismatched decoding (see Section \ref{sec:related} for a more detailed summary), the vast majority of results concerned achievable rates for discrete memoryless channels (DMCs) via random coding, notably including the Generalized Mutual Information (GMI) via the i.i.d.~ensemble \cite{KaplanShamai1993}, and the so-called LM rate\footnote{Seemingly an acronym for ``lower [bound on the] mismatch [capacity]''.} via the constant-composition ensemble \cite{Hui1983,CsiszarNarayan1995}.  On the other hand, single-letter \emph{upper (converse) bounds} on the mismatch capacity largely arose only recently \cite{KangarshahiGuillen2021,SomekhBaruch2022}, and were severely lacking in the early literature, with one clear exception: Balakirsky \cite{Balakirsky1995} reported that the LM rate is tight for binary-input DMCs. 

Roughly two decades later, a numerical counter-example was given that refuted this binary-input converse \cite{ScarlettEtAl2015}; specifically, it was shown that an achievable rate based on superposition coding can be strictly higher than the LM rate.\footnote{The relevant code from \cite{ScarlettEtAl2015} has been moved to a different location \cite{ScarlettCode2026}.  Moreover, one of the lemmas in \cite{ScarlettEtAl2015} contains a minor error that turns out to be inconsequential and easily fixed; see Appendix \ref{sec:our_correction} of the present paper for the details.} However, this work did not provide any clarification on where Balakirsky's argument fails, with the authors writing \emph{``...despite considerable effort, we have been unable to understand the analysis given in [Balakirsky's paper] in sufficient detail to identify any major errors therein''}.

In this paper, we perform a ``diagnosis'' of Balakirsky's paper \cite{Balakirsky1995} and identify two specific proof steps that appear to be incorrect and unlikely to be repairable.  We show that:
\begin{itemize}
    \item[(i)] the steps themselves \cite[Eqs.~(75) and (76)]{Balakirsky1995} regarding selector sequences and restricted codes are not true in general (see Section \ref{sec:false-steps});
    \item[(ii)] the proposition containing those steps \cite[Proposition~6.5]{Balakirsky1995}, which asymptotically compares the error probability with and without a suitably-defined code restriction, is not true in general, suggesting that the flawed steps may not be ``locally repairable'' (see Section \ref{sec:prop65-superposition}).
    \item[(iii)] the lemma containing that proposition \cite[Lemma~4.5]{Balakirsky1995} (termed the \emph{permutation lemma}) is not true in general, suggesting that the flawed proposition may not be repairable either (see Section \ref{sec:perm-false}).
\end{itemize}
A natural fourth item in this list would be that the main theorem in \cite{Balakirsky1995} is not true in general, which is precisely what was already established in \cite{ScarlettEtAl2015}.  We note that our presentation follows a different logical ordering from the above list: We first establish (iii), followed by (i) and then (ii).  All three claims were developed with significant AI assistance; a more detailed declaration is given at the end of the paper.

\subsection{Problem Setup}

{\bf The mismatched decoding problem.} Let $W=\{W_x(y)\}$ be a discrete memoryless channel from a finite input alphabet $\cX$ (assumed to be $\{0,1\}$ throughout this paper) to a finite output alphabet $\cY$, and let $q=\{q(x,y)\}$ be a non-negative decoding metric.  A length-$n$ code is a collection
\begin{equation}
    \cC_n=\{\bx^{(1)},\ldots,\bx^{(M)}\}\subseteq \cX^n,
    \label{eq:code}
\end{equation}
and the decoder estimates the message according to the \emph{maximum metric rule}:
\begin{gather}
    \widehat{m} = \argmax_{j = 1,\dotsc,M} q^n(\bx^{(j)},\by), \label{eq:dec-rule} \\ 
    q^n(\bx,\by)=\prod_{i=1}^n q(x_i,y_i).
    \label{eq:metric-product}
\end{gather}
As in \cite{Balakirsky1995}, we assume that ties are counted as errors.  

In the following, we will use the generic symbol $K$ for a channel that might later be identified with $W$, $V$ (introduced below), or both.   For a code $\cC_n$ and an $n$-letter channel $K_n$, we write
\begin{equation}
    \Pe^{(n)}(K_n;\cC_n)
    =\max_{\bx\in\cC_n}
        \sum_{\by}K_n(\by|\bx)
        \1\left\{ q^n(\bx,\by) \leq \max_{\bx'\in\cC_n\setminus\{\bx\}} q^n(\bx',\by) \right\}
    \label{eq:maximal-error}
\end{equation}
for the maximal mismatched-decoding error probability.  When the code is clear from context, we simply write $\Pe^{(n)}(K_n)$.  For a memoryless channel $\{K_x(y)\}$, we write $K^n(\by|\bx)=\prod_{i=1}^nK_{x_i}(y_i)$. %, which is distinct from the combinatorial channel $K_n$ corresponding to \eqref{eq:combinatorial-channel}.

Throughout the paper, given an input distribution $P$ and channel $W$, the resulting output marginal is denoted by $PW(y) = \sum_x P(x)W_x(y)$, and similarly for $PV(y)$.  Moreover, the function $\log(\cdot)$ has base $e$, and information measures are in units of nats.

{\bf Achievable rates.} For the pair $(W,q)$, a rate $R$ is said to be \emph{achievable} if, for any $\delta > 0$, there exists a sequence of codes $\mathcal{C}_n$ with at least $e^{n(R-\delta)}$ codewords attaining $\lim_{n \to \infty} \Pe^{(n)}(W^n;\cC_n) = 0$.  The \emph{mismatch capacity} $\Cm$ is defined to be the maximum of all achievable rates.  We are primarily interested in the achievable \emph{LM rate}, which is defined for a fixed input distribution $P$ by \cite{Hui1983,CsiszarNarayan1995}
\begin{equation}
    \Ilm(P) =\min_{\substack{ V:\,PV=PW,\, \\ 
        \E_{P\times V}[\log q(X,Y)] \geq \E_{P\times W}[\log q(X,Y)]}} I(P,V),
    \label{eq:LM-primal}
\end{equation}
and the optimized LM rate is $\Clm=\max_P\Ilm(P)$, giving $\Clm \le \Cm$.  Its achievability is established by a standard analysis of constant-composition random codes via the method of types.

We will also use the dual form of the standard superposition coding rate \cite{ScarlettEtAl2016,SomekhBaruch2015}, which depends on an auxiliary alphabet $\mathcal{U}$ and will be described in more detail in Section \ref{sec:summary_ce}.  For now, we only introduce the relevant rate expression.  
% Let $K(y|x)$ be a generic channel paired with the metric $q(x,y)$, and fix a joint distribution $Q_{UX}$. 
For $s\geq0$, $\rho \in [0,1]$, and a real-valued function $a(u,x)$, and a joint distribution $Q_{UX}$, define
\begin{align}
    R^{\rm SC}_1(s,a)
    & =\E\left[
        \log\frac{q(X,Y)^s e^{a(U,X)}}
        {\sum_{\bar x}Q_{X|U}(\bar x|U)
            q(\bar x,Y)^s e^{a(U,\bar x)}} \right],
    \label{eq:SC-dual-B1}\\
    R^{\rm SC}_0(\rho,s,a;R_1)
    & =\E\left[
        \log\frac{(q(X,Y)^s e^{a(U,X)})^\rho}
        {\sum_{\bar u}Q_U(\bar u) \left(\sum_{\bar x}Q_{X|U}(\bar x|\bar u)
            q(\bar x,Y)^s e^{a(\bar u,\bar x)}\right)^\rho} \right] -\rho R_1,
    \label{eq:SC-dual-B0}
\end{align}
where $(U,X,Y)\sim Q_{UX}\times W$.  Then, the rate $R = R_0 + R_1$ is achievable whenever $R_1 \le \sup_{s \ge 0,a(\cdot)} R^{\rm SC}_1(s,a)$ and $R_0 \le \sup_{s \ge 0,\rho \in [0,1],a(\cdot)} R^{\rm SC}_0(\rho,s,a;R_1)$ \cite{ScarlettEtAl2016,SomekhBaruch2015}.  (Note that $s$ and $a(\cdot)$ may be optimized separately in the two bounds, but $Q_{UX}$ must be common to both.) Following the counter-example in \cite{ScarlettEtAl2015}, we will apply these formulas to the pair channel $(W^2,q^2)$, and similarly to $(V^2,q^2)$, by treating each pair as a single channel symbol (see Section \ref{sec:summary_ce} for details).

{\bf Other useful notations and definitions.} 
For an input distribution $P\in\cP(\cX)$, a \emph{$P$-constant-composition} code is one in which every codeword has empirical distribution $P$.  Given a conditional type $V\in\cP(\cY|\cX)$ and a sequence $\bx$ of type $P$, define the corresponding conditional type class
\begin{equation}
    \T_V^n(\bx)
    =\big\{\by\in\cY^n:\widehat P_{\by|\bx}=V\big\}
    \label{eq:conditional-type-class}
\end{equation}
where $\widehat P_{\by|\bx}$ denotes the empirical conditional distribution.  Balakirsky's \emph{combinatorial channel} associated with $V$ is defined as \cite[Eq.~(28)]{Balakirsky1995}
\begin{equation}
    V_n(\by|\bx)
    =\frac{\1\{\by\in\T_V^n(\bx)\}}{|\T_V^n(\bx)|},
    \label{eq:combinatorial-channel}
\end{equation}
and $W_n(\by|\bx)$ is defined analogously.  
Here and subsequently, following \cite[Convention~4.2]{Balakirsky1995}, whenever exact conditional types are used, we restrict $n$ to an infinite common subsequence on which the relevant distributions are indeed valid types.  Since the distributions used in the present paper are rational, such a subsequence is obtained simply by taking $n$ to be multiples of a common denominator.
% To simplify notation, we will interchangeably treat $P,V,W$ as general distributions and as types corresponding to length $n$. with the understanding that more generally the closest type approximations can be used without fundamentally altering any of the arguments. 

For a sequence $\{a_n\}$ of non-negative numbers, define the following asymptotic notation \cite[Definition 2.1]{Balakirsky1995}:
\begin{equation}
	\Upsilon(\{a_n\}) =
	\begin{cases}
		1 & \liminf_{n\to\infty}a_n>0 \\
		0 & \textrm{otherwise}.
	\end{cases}
	\label{eq:upsilon}
\end{equation}
In particular, an inequality of the form $\Upsilon(\{a_n\})\leq \Upsilon(\{b_n\})$ states that $a_n$ being eventually bounded away from zero implies the same for $b_n$.

While we framed the problem in terms of maximum-metric decoding, one may equivalently use an \emph{additive distortion function} $d$ and select the codeword \emph{minimizing} the total distortion
$d^n(\bx,\by) = \sum_{i=1}^n d(x_i,y_i)$.  In particular, for arbitrary functions $a(x)$ and $b(y)$, the following choice gives an equivalent decoding rule under constant-composition codes \cite{CsiszarNarayan1995}:
\begin{equation}
	d_x(y) = a(x) + b(y) - \log q(x,y).
	\label{eq:d_log_q}
\end{equation}
Indeed, the contribution $\sum_i a(x_i)$ is the same for every codeword of a given composition, while $\sum_i b(y_i)$ only depends on the output sequence.  We will often use distortion-based notation for consistency with \cite{Balakirsky1995}.\footnote{When $\mathcal{X} = \{0,1\}$ and $\min_{x,y} q(x,y) > 0$, we can readily choose $a(\cdot)$ and $b(\cdot)$ to match the conventions $d_x(y) \in [0,d_{\max}]$ and $d_0(y)=0$ ($\forall y$) from \cite{Balakirsky1995}: Setting $b(y)=\log q(0,y)-a(0)$ ensures $d_0(y)=0$, and choosing $a(1)-a(0)$ sufficiently large makes all $d_1(y)$ non-negative.}
Given a distortion measure $d_x(y)$, under a $P$-constant-composition code and channel $K$, we will write
\begin{equation}
    d(P,K)=\sum_{x,y}P(x)K_x(y)d_x(y). 
    \label{eq:average-distortion}
\end{equation}
for the corresponding average distortion.

\subsection{Related Work} \label{sec:related}

By far the most relevant works are those deriving the LM rate \cite{Hui1983,CsiszarNarayan1995}, Balakirsky's claimed converse \cite{Balakirsky1995}, the numerical counter-example \cite{ScarlettEtAl2015}, and those deriving achievable rates based on superposition coding \cite{ScarlettEtAl2016,SomekhBaruch2015}.  We provide a broader (but still brief) outline of the related literature as follows; see \cite{ScarlettEtAl2020} for a detailed survey.

{\bf Basic achievable rates.} The LM rate was obtained via constant-composition random coding by both Hui \cite{Hui1983} and Csisz\'ar and Narayan \cite{CsiszarNarayan1995}.  Another widely-studied achievable rate is the generalized mutual information (GMI), which arises from i.i.d.~random coding, e.g., see \cite{KaplanShamai1993}.  These rates were further studied by Merhav \emph{et al.}~\cite{MerhavEtAl1994} and Ganti \emph{et al.}~\cite{GantiEtAl2000}, including extensions to continuous-alphabet channels via cost-constrained random coding.  Refined asymptotics and similar cost-constrained ensembles were subsequently studied by Scarlett \emph{et al.}~\cite{ScarlettEtAl2014}.

{\bf Improved rates.} It is well known that the LM rate can be strictly smaller than the mismatch capacity in general.  One approach to improving it is the product-channel construction of Csisz\'ar and Narayan \cite{CsiszarNarayan1995}, in which the LM rate is applied to higher-order product channels.  They conjectured that the limiting multi-letter rate obtained in this manner equals the mismatch capacity.  This conjecture remains open for the standard maximum-metric mismatched decoder, but it was shown to be true for erasures-only metrics in \cite{CsiszarNarayan1995}.  More recently, Molina and Guill\'en i F\`abregas \cite{MolinaGuillen2026} established the corresponding conjecture for mismatched stochastic decoding.

A distinct approach to improving the LM rate is to use multi-user coding techniques.  Lapidoth \cite{Lapidoth1996} initiated this idea, obtaining an improved achievable rate by embedding the single-user problem into a mismatched multiple-access channel.  Superposition coding was subsequently studied (concurrently) by Scarlett \emph{et al.}~\cite{ScarlettEtAl2016} and Somekh-Baruch \cite{SomekhBaruch2015}, leading to further achievable rates that can strictly exceed the LM rate.  

{\bf Claimed converse and refutation.} As outlined in the introduction, Balakirsky \cite{Balakirsky1995} claimed that the LM rate equals the mismatch capacity for binary-input discrete memoryless channels.  Scarlett \emph{et al.}~\cite{ScarlettEtAl2015} refuted this claim by combining the two kinds of improvement outlined above; superposition coding was applied over a second-order product channel to obtain an achievable rate strictly exceeding the LM rate.  The present paper revisits Balakirsky's converse proof and identifies concrete failures in the underlying argument.

{\bf Other converse results.} Single-letter converse bounds on the mismatch capacity were obtained relatively recently by Asadi Kangarshahi and Guill\'en i F\`abregas \cite{KangarshahiGuillen2021} and Somekh-Baruch \cite{SomekhBaruch2022}.  These works give non-trivial upper bounds that can lie strictly below the matched capacity, complementing the improved achievable rates described above.  In particular, for the counter-example of \cite{ScarlettEtAl2015}, there is a strict separation between the LM rate, the improved achievable rate via superposition coding, the best known single-letter upper bound, and the matched capacity.

Beyond the above-outlined results on the Csisz\'ar--Narayan conjecture, other multi-letter results have also been established.  In particular, Somekh-Baruch derived a general multi-letter information-spectrum formula for the mismatch capacity \cite{SomekhBaruch2015General}, and subsequently developed further multi-letter converse results for discrete memoryless channels \cite{SomekhBaruch2018} including a ``soft converse'' mirroring the Csisz\'ar--Narayan conjecture for rational-valued additive decoding metrics.

\section{Claimed Converse and Counter-Example} \label{sec:claimed}

\subsection{Summary of Claimed Converse}

After a standard reduction that permits focusing on constant-composition codes (with composition denoted by $P$), Balakirsky's argument boils down to two main lemmas concerning the true channel $W$ and another channel $V$ satisfying the constraints of \eqref{eq:LM-primal}:
\begin{itemize}
	\item \textbf{Combinatorial approximation lemma \cite[Lemma 4.4]{Balakirsky1995}:} This lemma replaces the memoryless channels $V$ and $W$ by combinatorial channels $V_n$ and $W_n$ (defined in \eqref{eq:combinatorial-channel})  that are uniform over the corresponding conditional type classes. The lemma claims that, for the purpose of whether the maximal decoding error stays bounded away from zero, this replacement does not change the asymptotic behavior.
	\item \textbf{Permutation lemma \cite[Lemma 4.5]{Balakirsky1995}:} This lemma couples an output $y\in T_V^n(x)$ to an output $y'\in T_W^n(x)$ by suitable permutations of coordinates, using the condition $PV=PW$ from \eqref{eq:LM-primal}. The lemma claims that if decoding under $V_n$ has non-vanishing maximal error, then decoding under $W_n$ must also have non-vanishing maximal error.
\end{itemize}
A more detailed summary of Balakirsky's argument will be given in \eqref{eq:Bal-steps} below.

The permutation lemma will be more prominent in this paper.  It is invoked for $V$ satisfying 
\begin{align}
    PV & =PW, \label{eq:bal-marginal}\\
    I(P,V) & \leq I(P,W), \label{eq:bal-mutual-information}\\
    d(P,V) & \leq d(P,W), \label{eq:bal-distortion}
\end{align}
and for every sequence of $P$-constant-composition codes, it claims that
\begin{equation}
    \Upsilon\big(\{\Pe^{(n)}(V_n)\}\big)
	\stackrel{\textrm{claimed}}{\leq} 
    \Upsilon\big(\{\Pe^{(n)}(W_n)\}\big),
    \label{eq:permutation-lemma-claim}
\end{equation}
where $V_n$ and $W_n$ are the exact-type combinatorial channels in \eqref{eq:combinatorial-channel}, $\Pe^{(n)}(\cdot)$ is the maximal error probability in \eqref{eq:maximal-error}, and the asymptotic $\Upsilon$ notation is defined in \eqref{eq:upsilon}. 

By combining the combinatorial approximation lemma and permutation lemma with the weak converse for $V$, and optimizing over all such $V$, the following composition-dependent upper bound on the rate $R$ is stated in \cite{Balakirsky1995}:
\begin{equation}
    R \stackrel{\textrm{claimed}}{\leq} \min_{V:\,PV=PW,\,d(P,V)\leq d(P,W)}I(P,V) = \Ilm(P).
    \label{eq:bal-fixed-P-bound}
\end{equation}
Further optimizing over $P$ gives the claimed equality $\Cm \stackrel{\textrm{claimed}}{=} \Clm$ for binary-input channels \cite{Balakirsky1995}.

\subsection{Summary of the Counter-Example} \label{sec:summary_ce}

The counter-example of \cite{ScarlettEtAl2015} refuting this converse uses $\mathcal{X} = \{0,1\}$, $\mathcal{Y} = \{0,1,2\}$, and
\begin{equation}
    W=\begin{bmatrix}
        0.97 & 0.03 & 0\\
        0.10 & 0.10 & 0.80
    \end{bmatrix},
    \qquad
    q=\begin{bmatrix}
        1 & 1 & 1\\
        1 & 0.5 & 1.36
    \end{bmatrix}.
    \label{eq:counterexample-Wq}
\end{equation}
By combining theoretical simplifications with numerical calculations, it was shown that
\begin{equation}
    0.136874\leq \Clm\leq 0.136900,
    \qquad
    C_{\mathrm{SC}}^{(2)}\geq 0.137998,
    \label{eq:published-gap}
\end{equation}
where $C_{\mathrm{SC}}^{(2)}$ is an improved achievable rate obtained via superposition coding \cite{ScarlettEtAl2016,SomekhBaruch2015}.  Specifically, the construction applies superposition coding to the product channel $(W^2,q^2)$, noting that any achievable rate for this pair amounts to half that rate being achievable for $(W,q)$.  This product channel has input alphabet $\cX^2=\{00,01,10,11\}$, and the auxiliary alphabet for superposition coding is chosen as $\cU=\{0,1\}$.  

The constant-composition superposition ensemble is specified by a joint
distribution $Q_{UX^2}$, length $m$, and rates $(R_0,R_1)$, and consists of the following:
\begin{itemize}
    \item $e^{mR_0}$ auxiliary sequences $\bu$, referred to as
    \emph{cloud centers}, each drawn uniformly from the type class
    corresponding to $Q_U$;
    \item for each cloud center $\bu$, $e^{mR_1}$ conditionally generated
	codewords $\bx$, referred to as \emph{satellites}, each drawn uniformly from the conditional type class corresponding to $Q_{X^2|U}$ given $\bu$.
\end{itemize}
Note that $m$ represents the number of uses of $(W^2,q^2)$, and we set $m=n/2$ 
to match the block length of $n$ with respect to $(W,q)$. 
The counter-example sets
\begin{equation}
    P=(p_0,p_1)=(0.749,0.251),
    \label{eq:P-counterexample}
\end{equation}
and chooses $Q_{UX^2}$ via its marginals and conditionals as follows:
\begin{align}
    Q_U & =(1-p_1^2,p_1^2), \label{eq:QU}\\
    Q_{X^2|U=0}
    & =\frac{1}{1-p_1^2}
    (p_0^2,p_0p_1,p_0p_1,0), \label{eq:QXU0}\\
    Q_{X^2|U=1}
    & =(0,0,0,1). \label{eq:QXU1}
\end{align}
This choice has $X^2$-marginal exactly $P^2$.  Hence, every length-$m$
satellite codeword has composition $P^2$, which implies that after unfolding the symbol
pairs, every resulting length-$n$ binary codeword has composition $P$.

\subsection{Further Useful Observations} \label{sec:further-useful}

In \cite[Eq.~(51)]{ScarlettEtAl2015} it is established that the rate pair $(R_0,R_1) = (0.0356005,0.2403966)$ lies in the achievable superposition coding region (see also \cite{ScarlettCode2026} for the code).     
For the diagnosis below, it is convenient to ``back off'' slightly and set
\begin{equation}
    R_0=0.035 \qquad R_1=0.240,
    \label{eq:backed-off-rates}
\end{equation}
corresponding to the binary-symbol rate
\begin{equation}
    R=\frac{R_0+R_1}{2}=0.1375.
    \label{eq:binary-rate}
\end{equation}
Because $(R_0,R_1)$ is strictly inside the superposition coding achievable region for $(W^2,q^2)$, the ensemble has exponentially vanishing average error probability under $W$  \cite{ScarlettEtAl2016,SomekhBaruch2015}.  Expurgating a fixed fraction (e.g., half) of the messages gives \emph{maximal error probability} tending to zero while preserving the composition and asymptotic rate.

We will use the following explicit ``test channel'' satisfying the constraints in \eqref{eq:bal-marginal}--\eqref{eq:bal-distortion}:
\begin{equation}
	V=\begin{bmatrix}
		\frac{6449550}{7490000} & \frac{475449}{7490000} & \frac{565001}{7490000}\\
		0.425 & 0.0001 & 0.5749
	\end{bmatrix}.
	\label{eq:test-channel-V}
\end{equation}
While the top row may appear unnatural, it comes from solving the constraint $PV=PW$ for $V_0$, namely $V_0(y)=\frac{PW(y)-P(1)V_1(y)}{P(0)}$ (see also \cite{ScarlettCode2026} for numerical verification that $PV = PW$ here).  In approximate numerical form, it gives $V_0 \approx (0.861088, 0.063478, 0.075434)$.

A direct evaluation of mutual information then gives the following (see \cite{ScarlettCode2026} for numerical verification), where $I(P,W)$ is included only for the purpose of verifying \eqref{eq:bal-mutual-information}:
\begin{align}
    I(P,V) & =0.1368941713\dotsc
    \label{eq:IPV-value}\\
    I(P,W) & =0.4205303151\dotsc
    \label{eq:IPW-value}
\end{align}
Moreover, the difference in decoding metrics simplifies due to most values of $\log q(x,y)$ being zero under the metric in \eqref{eq:counterexample-Wq}:
\begin{align}
    &\E_{P\times V}[\log q(X,Y)]
        -\E_{P\times W}[\log q(X,Y)]
    \notag\\
    &\hspace{2cm}
    	= p_1 \bigg( \big(V_1(1)-W_1(1)\big)\log\frac{1}{2} + \big(V_1(2) - W_1(2)\big) \log(1.36)  \bigg) \notag \\
    &\hspace{2cm}
        =0.251 \big(0.0999\log 2-0.2251\log 1.36\big)
    \notag\\
    &\hspace{2cm}
        =7.67995\dotsc\times 10^{-6}>0.
    \label{eq:metric-feasible-V}
\end{align}
Hence, $V$ satisfies the marginal and metric constraints defining the LM rate in \eqref{eq:LM-primal} for $(P,W,q)$.  Under the distortion representation \eqref{eq:d_log_q}, we can equivalently write \eqref{eq:metric-feasible-V} as $d(P,V)<d(P,W)$, in accordance with \eqref{eq:bal-distortion}.

\section{Diagnosing the Claimed Converse}
\label{sec:diagnosis}

Before proceeding, it is useful to summarize the overall logic of Balakirsky’s argument:\footnote{\cite[Section VI-F]{Balakirsky1995} instead states an asymptotic \emph{equality} (using $\Upsilon$ notation) at the $\le$ step, but the non-asymptotic upper bound is sufficient here and follows trivially from the fact that the restriction only removes competing codewords.}
\begin{equation}
	P_e^{(n)}(V^n)
	\overset{\mathrm{L4.4}}{\approx}
		\underbrace{
		P_e^{(n)}(V_n)
		\overset{\mathrm{P6.5}}{\approx}
		\widehat P_e^{(n)}(V_n)
		\overset{\mathrm{P6.7}}{\lesssim}
		\widehat P_e^{(n)}(W_n)
		\le
		P_e^{(n)}(W_n)
}	_{\text{Lemma 4.5 (permutation lemma)}}
\overset{\mathrm{L4.4}}{\approx}
P_e^{(n)}(W^n), \label{eq:Bal-steps}
\end{equation}
where all $\approx$ and $\lesssim$ indicate statements that are claimed to hold upon applying $\Upsilon(\cdot)$ (see \eqref{eq:upsilon}) to both sides, $(\cdot)^n$ and $(\cdot)_n$ respectively denote memoryless channels and combinatorial channels (\emph{cf.}, \eqref{eq:combinatorial-channel}), and $\widehat P_e^{(n)}$ is a notion of restricted error probability that we will define in Section \ref{sec:false-steps}.  Note also that `L` indicates a lemma from \cite{Balakirsky1995}, and `P' indicates a proposition.

\subsection{Failure of the Permutation Lemma } \label{sec:perm-false}

We first state a lemma that serves to separate the failure of the permutation lemma from any possible issue in the combinatorial approximation lemma (summarized at the start of Section \ref{sec:claimed}).

\begin{lemma}
For the choices of $P$, $V$, and $W$ in \eqref{eq:P-counterexample}, \eqref{eq:counterexample-Wq}, and \eqref{eq:test-channel-V} respectively, there is a sequence of $P$-constant-composition codes of rate approaching $0.1375$ such that
\begin{gather}
    \lim_{n \to \infty} \Pe^{(n)}(W_n) = 0,
    \label{eq:exact-W-good}\\
    \liminf_{n\to\infty}\Pe^{(n)}(V_n) >0.
    \label{eq:exact-V-bad}
\end{gather}
Consequently, the permutation lemma (stated in \eqref{eq:permutation-lemma-claim}) is false.
\end{lemma}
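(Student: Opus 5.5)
The plan is to take the codes to be the (expurgated) constant-composition superposition codes of Section~\ref{sec:summary_ce} for $(W^2,q^2)$. We use $Q_{UX^2}$ from \eqref{eq:QU}--\eqref{eq:QXU1}, the backed-off rates \eqref{eq:backed-off-rates}, and $m=n/2$. As noted there, every unfolded length-$n$ codeword has composition $P$, and the binary-symbol rate is $0.1375$ by \eqref{eq:binary-rate}. The two claims are then proved separately: \eqref{eq:exact-V-bad} by a counting (strong-converse-type) argument using $0.1375>I(P,V)$, and \eqref{eq:exact-W-good} by transferring the random-coding guarantee from the memoryless $W^n$ to the combinatorial $W_n$.

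\textbf{Step 1 (the $V_n$ claim).} Since ties are counted as errors, the sets $D_j=\{\by: q^n(\bx^{(j)},\by)>\max_{j'\ne j}q^n(\bx^{(j')},\by)\}$ are pairwise disjoint. Under $V_n$, every output from a $P$-composition codeword lies in the single output type class $\T_{PV}$, which has size at most $e^{nH(PV)}$. Meanwhile $|\T_V^n(\bx)|\ge (n+1)^{-|\cX||\cY|}e^{nH(V|P)}$. Hence the average probability of correct decoding is at most
\begin{equation*}
\frac{1}{M}\sum_j \frac{|D_j\cap \T_V^n(\bx^{(j)})|}{|\T_V^n(\bx^{(j)})|}\le \frac{(n+1)^{|\cX||\cY|}e^{nI(P,V)}}{M}.
\end{equation*}
With $M\approx e^{0.1375 n}$ and $I(P,V)=0.13689\ldots$ from \eqref{eq:IPV-value}, this bound tends to zero. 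Since maximal error is at least average error, $\Pe^{(n)}(V_n)\to1$. This step works for any code of that composition and rate, so it places no constraint on the choice of code.

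\textbf{Step 2 (the $W_n$ claim).} Because $(R_0,R_1)$ lies strictly inside the superposition region, the ensemble-average error under $W^n$ decays exponentially, say as $e^{-nE}$ with $E>0$. For a fixed codeword $\bx$, the conditional law of $\by$ under $W^n(\cdot|\bx)$ given $\by\in\T_W^n(\bx)$ is exactly $W_n(\cdot|\bx)$. Moreover, $W^n(\T_W^n(\bx)|\bx)\ge (n+1)^{-|\cX||\cY|}$. Hence, for every code and every message, the error probability under $W_n$ is at most $(n+1)^{|\cX||\cY|}$ times the error probability under $W^n$. Averaging over the ensemble, the expected average error under $W_n$ is at most $\mathrm{poly}(n)e^{-nE}\to0$. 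Fixing a good realization and expurgating the worst half of the messages (with respect to $W_n$ directly) gives maximal error $\Pe^{(n)}(W_n)\to0$. Expurgation preserves composition $P$ and changes the rate by $O(1/n)$.

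\textbf{Conclusion and main obstacle.} Section~\ref{sec:further-useful} verifies \eqref{eq:bal-marginal}--\eqref{eq:bal-distortion} for this $V$. Steps 1--2 then give $\Upsilon(\{\Pe^{(n)}(V_n)\})=1$ and $\Upsilon(\{\Pe^{(n)}(W_n)\})=0$, contradicting \eqref{eq:permutation-lemma-claim}. The delicate point is Step 2. It needs an exponentially (not merely vanishing) small ensemble error under $W^n$ at the backed-off rates, including ties counted as errors, so that the polynomial type-class penalty is absorbed. I would justify this by citing the positive random-coding error exponents for constant-composition superposition coding in \cite{ScarlettEtAl2016,SomekhBaruch2015} at interior rate pairs. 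In that analysis, ties are already handled as errors, since the union bounds use $\ge$-type events.
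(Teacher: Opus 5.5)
Your proposal is correct. Step~2 is essentially the paper's argument. The paper first expurgates with respect to $W^n$ to get exponentially small maximal error, then applies the per-codeword bound $\Pe^{(n)}(W_n)\le (n+1)^{6}\Pe^{(n)}(W^n)$ in \eqref{eq:memoryless-to-exact-W}. You instead apply the same transfer inside the ensemble average and expurgate with respect to $W_n$ directly. The two orderings are interchangeable, and both rest on the same cited fact that the superposition ensemble's error under $W^n$ decays exponentially at the backed-off rates.

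Step~1 is where you genuinely differ. The paper bounds $I(M;Y^n)\le \log|\T_{PV}^n|-\log|\T_V^n(\bx)|\le nI(P,V)+O(\log n)$ and invokes Fano. That yields only $\liminf_n\Pe^{(n)}(V_n)\ge 1-I(P,V)/R\approx 0.0044$ in \eqref{eq:Fano-exact-V}. You use the same two type-class cardinality bounds in a direct packing argument over the disjoint correct-decoding regions. This gives the strong-converse conclusion $\Pe^{(n)}(V_n)\to 1$, with the polynomial factor absorbed because $0.1375-I(P,V)\approx 6\times 10^{-4}>0$. Your version gives a stronger and more robust conclusion at no extra cost, while the paper's Fano route is a slightly more standard one-liner. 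Either suffices to refute \eqref{eq:permutation-lemma-claim}, since only $\Upsilon(\{\Pe^{(n)}(V_n)\})=1$ is needed.
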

\begin{proof}
We use the superposition code described in Section \ref{sec:summary_ce}, including the standard expurgation argument to convert from average error to maximal error (without impacting the asymptotic rate), thus having exponentially vanishing maximal error under the memoryless channel $W^n$.  

We note that the probability that $W^n$ produces the exact conditional type $W$ is only polynomially small.  More precisely, a standard conditional-type probability bound \cite[Ch.~2]{CsiszarKorner1981} gives the following for every codeword $\bx$ of type $P$ (where the constant $6$ comes from $|\cX| \cdot |\cY|$):
\begin{equation}
    W^n\big(\T_W^n(\bx) \,|\, \bx \big)\geq (n+1)^{-6}.
    \label{eq:type-prob-lower}
\end{equation}
Since the conditional $n$-letter channel law given this event is precisely $W_n$, it follows that
\begin{equation}
    \Pe^{(n)}(W_n) \leq (n+1)^{6}\Pe^{(n)}(W^n) \to 0.
    \label{eq:memoryless-to-exact-W}
\end{equation}
For the combinatorial channel $V_n$ in \eqref{eq:combinatorial-channel}, let the message $M$ be uniform, and let $Y^n$ be the channel output. Since every conditional type class $\T_V^n(\bx)$ (among all $\bx$ with composition $P$) has the same size, and every possible output has type $PV$, we have
\begin{align}
    H(Y^n|M) & =\log |\T_V^n(\bx)|, \label{eq:HY-given-M}\\
    H(Y^n) & \leq \log |\T_{PV}^n|. \label{eq:HY-upper}
\end{align}
An application of standard bounds on type class sizes \cite[Ch.~2]{CsiszarKorner1981} then yields
\begin{equation}
    I(M;Y^n) = H(Y^n) - H(Y^n | M) \leq nI(P,V)+O(\log n).
    \label{eq:exact-V-information}
\end{equation}
Fano's inequality therefore gives the following for an arbitrary decoder:
\begin{equation}
    \liminf_{n\to\infty}\Pe^{(n)}(V_n) \geq 1-\frac{I(P,V)}{R},
    \label{eq:Fano-exact-V}
\end{equation}
which is strictly positive in view of \eqref{eq:binary-rate} and \eqref{eq:IPV-value}.  This applies in particular to the mismatched decoder.  Equations \eqref{eq:memoryless-to-exact-W} and \eqref{eq:Fano-exact-V} contradict \eqref{eq:permutation-lemma-claim}.
\end{proof}

The significance of this lemma is that the combinatorial approximation lemma is not needed to see the failure, allowing us to focus our attention on the permutation lemma and the intermediate results that it depends on.

\subsection{Two Specific Incorrect Steps in \cite[Proposition 6.5]{Balakirsky1995}}
\label{sec:false-steps}

We now diagnose two specific steps based on selector sequences and restrictions, namely \cite[Eqs.~(75) and (76)]{Balakirsky1995}, showing that they are incorrect and unlikely to be repairable.  To describe these steps, we need to introduce the notion $\widehat P_e^{(n)}$ of restricted error probability that we used in \eqref{eq:Bal-steps}.

{\bf Defining the restricted code.} Since the definition of the restricted code is rather technical, we first attempt to give some intuition.  The high-level idea is to extract the competing codewords whose empirical behavior on the selected coordinates is ``typical'' with respect to the \emph{channel difference} $V - W$ (with suitable clipping and normalization described below).  One might then hope that this restriction does not substantially alter whether decoding under $V_n$ has non-vanishing error probability, since sufficiently many ``typical'' error-causing competing codewords should remain (the $\overset{\mathrm{P6.5}}{\approx}$ step in \eqref{eq:Bal-steps}).  As part of the proof of this claim, Balakirsky modifies the mechanism for selecting coordinates so that it depends only on the output sequence and not on the transmitted codeword, and this is where we identify the flawed steps (see \eqref{eq:Bal-75} and \eqref{eq:Bal-76} below).

We now proceed with the technical details.  For two binary-input channels $V$ and $W$ satisfying $PV=PW$, Balakirsky defines the following (denoted by $Y_x^+$ in \cite[Eq.~(54)]{Balakirsky1995}:
\begin{equation}
    \cY_x^+=\{y:V_x(y)>W_x(y)\}.
    \label{eq:Y-plus}
\end{equation}
Note that since we are in the binary-input setting, the condition $PV = PW$ can be written as $p_0( V_0(y) - W_0(y) ) = -p_1 ( V_1(y) - W_1(y) )$, which implies the following whenever $\min\{p_0,p_1\} > 0$ (which is assumed throughout Balakirsky's analysis \cite[Convention 4.2]{Balakirsky1995}):
\begin{equation}
	\cY_0^+ = \cY_1^- \quad \text{ and } \quad \cY_1^+ = \cY_0^-. \label{eq:Ysets}
\end{equation}
 For each $y\in\cY_0^+\cup\cY_1^+$, we can therefore define \cite[Eq.~(66)]{Balakirsky1995}
 \begin{equation}
 	x(y) = (\text{unique input symbol such that $y\in\cY_{x(y)}^+$}). \label{eq:x(y)}
 \end{equation}
If $y \notin \cY_0^+\cup\cY_1^+$ (i.e., $V_x(y) = W_x(y)$ for both values of $x \in \{0,1\}$) then the value of $x(y) \in \{0,1\}$ is assigned arbitrarily, and its choice is inconsequential for the analysis.\footnote{In particular, when $V_x(y) = W_x(y)$ the two terms in the $\min\{\cdot\}$ in \eqref{eq:S-zero} are identical, and \eqref{eq:S-one} is identically zero.}
 
Next, the distribution $\Delta V_x$ appearing in \cite[Eqs.~(55) and (69)]{Balakirsky1995} can be written more compactly as follows (using the definitions of $U_x(y)$ and $k_x$ therein):
\begin{equation}
    \Delta V_x(y) =\frac{[V_x(y)-W_x(y)]^+} {\sum_{y'}[V_x(y')-W_x(y')]^+},
    \label{eq:Delta-V}
\end{equation}
where $[\cdot]^+ = \max\{0,\cdot\}$.  Thus, $\Delta V_x(y)$ represents a suitably clipped and normalized ``channel difference'' between $V$ and $W$.

Consider a positive tolerance sequence $\{\alpha_n\}$ satisfying \cite[Eq.~(71)]{Balakirsky1995} 
\begin{equation}
    \alpha_n\to0,
    \qquad
    \alpha_n\sqrt n\to\infty.
    \label{eq:alpha-65}
\end{equation}
Given a (transmitted or competitor) codeword $\bxbar$, an output sequence $\by$, and a \emph{selector sequence $\bz\in\{0,1\}^n$} (treated as arbitrary for now and specified later), define \cite[Eqs.~(67)--(68)]{Balakirsky1995} 
\begin{align}
    k_{ab}(y) & =\sum_{j:z_j=1} \1\{\bar x_j=b,\,y_j=y,\,x(y)=a\},
    \label{eq:kaby}\\
    k_{ab} & =\sum_{y\in\cY_a^+}k_{ab}(y).
    \label{eq:kab}
\end{align}
These are the empirical counts that we alluded to earlier, and the following notion of a \emph{restricted code} \cite[Definition 6.4]{Balakirsky1995} formalizes the notion of only maintaining codewords whose empirical behavior is ``typical'' with respect to $\Delta V$.

\begin{definition} \label{def:restriction}
	Given an output sequence $\by$ and a selector sequence $\bz$, the \emph{$\Delta V$-restricted code} $\cC_n(\by,\bz)$ consists of the codewords $\bxbar$ satisfying
	\begin{equation}
		\left|k_{ab}(y)-k_{ab}\Delta V_a(y)\right| \leq \alpha_n k_{ab}
		\label{eq:Bal-restriction}
	\end{equation}
	for all $a,b \in \{0,1\}$ and all $y \in \mathcal{Y}$.
\end{definition}

{\bf Defining the selector distributions.} 
The (conditional) distribution on $\bz$ initially introduced in \cite[Proposition 6.5]{Balakirsky1995} depends on both $\bx$ and $\by$.  Specifically, using our compact $[\cdot]^+$ notation, \cite[Eq.~(65)]{Balakirsky1995} can be written as
\begin{align}
    \wh Q_{x,y}(1) & =\frac{[V_x(y)-W_x(y)]^+}{V_x(y)},
    \label{eq:hatQ-one}\\
    \wh Q_{x,y}(0) & =1-\wh Q_{x,y}(1),
    \label{eq:hatQ-zero}
\end{align}
with the ratio set to zero when $V_x(y)=0$.  The corresponding type-based $n$-letter law is then defined as follows given $(\bx,\by)$:
\begin{equation}
	\wh{Q}_n(\bz | \bx,\by)=\frac{\mathbf{1}\{\wh P_{\bz\mid\bx,\by}=\wh Q\}}{|T^n_{\wh Q}(\bx,\by)|},
	\label{eq:Qhat-n}
\end{equation}
where $\wh{P}_{\bz\mid\bx,\by}$ is the empirical conditional type, and $T^n_{\wh Q}(\bx,\by)$ denotes the conditional type class corresponding to conditional type $\wh Q$.

Balakirsky replaces the selector distribution $\wh{Q}_n$ by a different selector distribution $\wt{Q}_n$ that depends on $\by$ but not on the transmitted word.  Namely, his construction sets \cite[p.~1898 bottom]{Balakirsky1995}
\begin{equation}
    \wt Q_y(z)=\wh Q_{x(y),y}(z)
    \label{eq:tildeQ-single}
\end{equation}
with $x(y)$ defined in \eqref{eq:x(y)}, and lets $\wt{Q}_n(\bz|\by)$ be uniform on the corresponding conditional type class given $\by$, similarly to \eqref{eq:Qhat-n}.

{\bf Statement and refutation of \cite[Eq.~(75)]{Balakirsky1995}.} To formally state the claimed step replacing $\wh Q$ by $\wt Q$, let $\cC_n(\by,\bz)$ denote the restricted subcode from Definition \ref{def:restriction}, and define
\begin{equation}
    D_n(\bx,\by|\bz) =\min_{\bxbar\in\cC_n(\by,\bz)\setminus\{\bx\}} d^n(\bxbar,\by).
    \label{eq:restricted-D}
\end{equation}
Then, \cite[Eq.~(75)]{Balakirsky1995} claims that
\begin{align}
    \wh P_{\mathrm e}^{(n)}(\bx,V_n)
    &:= \sum_{\by,\bz}
    V_n(\by|\bx)\wh{Q}_n(\bz|\bx,\by)
    \mathbf{1}\{\bx\in\cC_n(\by,\bz)\}
    \mathbf{1}\{d^n(\bx,\by)\ge D_n(\bx,\by|\bz)\} \label{eq:Phat} \\
    & \stackrel{\textrm{claimed}}{\geq}
    \sum_{\by,\bz}
        V_n(\by|\bx)\wt{Q}_n(\bz|\by)
        \1\{\bx\in\cC_n(\by,\bz)\}
        \1\{d^n(\bx,\by)\geq D_n(\bx,\by|\bz)\}
     \label{eq:Bal-75} 
\end{align}
The stated justification for \eqref{eq:Bal-75} is that the $\by$-only selector places stronger restrictions on the incorrect codewords.  We speculate that this belief stemmed from a related observation, namely, that the $\by$-only selector has higher selection probabilities (i.e., $\widetilde Q_y(1)\geq \widehat Q_{x,y}(1)$ for all $(x,y)$), so one might expect it to constrain the competing codewords more strongly in \eqref{eq:Bal-restriction}.  However, the restriction in \eqref{eq:Bal-restriction} is not monotone in the number of selected coordinates; rather, it depends on the relative empirical frequencies.  The following lemma refutes the claimed step via a specific counter-example.

\begin{lemma} \label{lem:wrong-step-75}
	There exists a channel-metric pair $(W,q)$, input distribution $P$,
	auxiliary channel $V$, block length $n$, and  $P$-constant-composition code 
	for which the inequality in \eqref{eq:Bal-75} is false (while the conditions \eqref{eq:bal-marginal}--\eqref{eq:bal-distortion} are all satisfied).
\end{lemma}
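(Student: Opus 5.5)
The plan is to build a small, fully explicit example in which both sides of \eqref{eq:Bal-75} can be computed exactly by finite enumeration. The guiding observation is structural. Under the codeword-dependent selector $\wh Q_{x,y}$, a coordinate $j$ with $x_j\neq x(y_j)$ has $V_{x_j}(y_j)\le W_{x_j}(y_j)$ by \eqref{eq:Ysets}, so by \eqref{eq:hatQ-one} it is never selected. For the transmitted codeword this gives $k_{a,1-a}=0$.

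Moreover, since $\by\in\T_V^n(\bx)$ and $\bz$ has exact conditional type $\wh Q$ given $(\bx,\by)$, we get
\begin{equation*}
k_{aa}(y)=nP(a)V_a(y)\wh Q_{a,y}(1)=nP(a)[V_a(y)-W_a(y)]^+ .
\end{equation*}
This is exactly proportional to $\Delta V_a(y)$ by \eqref{eq:Delta-V}. Hence $\1\{\bx\in\cC_n(\by,\bz)\}=1$ deterministically on the left-hand side.

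Under $\wt Q$, by contrast, coordinates with $x_j=1-a$ and $y_j\in\cY_a^+$ are selected with probability $\wt Q_{y_j}(1)>0$. The counts $k_{a,1-a}(y)$ for the transmitted word therefore become random, hypergeometric-type splits. They generally violate the proportionality constraint \eqref{eq:Bal-restriction}, particularly when $\alpha_n$ is small enough at the fixed $n$ that the constraint forces exact proportionality. So the right-hand side loses mass through the factor $\1\{\bx\in\cC_n(\by,\bz)\}$.

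Concretely, I would proceed in the following order.
\begin{enumerate}[label=(\roman*)]
\item Choose $\cY=\{0,1,2\}$ and rational $P,W,V$ with $PV=PW$, $I(P,V)\le I(P,W)$ and $d(P,V)\le d(P,W)$ for a suitable $q$. I would arrange for $\cY_0^+$ to contain two outputs with $\Delta V_0=(\tfrac12,\tfrac12)$, so that \eqref{eq:Bal-restriction} requires balanced counts. I would also pick the entries so that $nP(a)V_a(y)$, $nP(a)W_a(y)$ and the selector counts are integers at a small $n$ (e.g.\ $n\le 12$).
\item Fix $\alpha_n$ at this $n$ so small (e.g.\ $\alpha_n<1/n$) that the restriction is exact proportionality. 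Only the finite-$n$ value of $\alpha_n$ matters here, and any such value is compatible with \eqref{eq:alpha-65} along a sequence.
\item Take a code with very few codewords, possibly just two, $\bx$ and $\bxbar$, of type $P$. I would choose them so that the competitor's restriction and distortion behave favourably. Ideally $\bxbar$ sits in the restricted code and satisfies $d^n(\bxbar,\by)\le d^n(\bx,\by)$ (ties count as errors) for a substantial fraction of $\by\in\T_V^n(\bx)$, making the left-hand side strictly positive.
\item Enumerate all $\by\in\T_V^n(\bx)$ and all $\bz$ in the two conditional type classes, evaluate both sides of \eqref{eq:Bal-75} exactly, and exhibit a strict violation. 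If the hand-picked parameters are awkward, I would do a small computer search over $(V,W,q)$ and code pairs, in the spirit of \cite{ScarlettCode2026}.
\end{enumerate}

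The main obstacle is step (iii): keeping the left-hand side positive. The competitor must itself survive the $\wh Q$-based restriction while also winning or tying in distortion, and the restriction on competitors is the very mechanism that could make the left-hand side zero.

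My first attempt would be to take $\bxbar$ differing from $\bx$ in only a swap of a $0$ and a $1$. Its counts $k_{ab}$ then nearly coincide with those of $\bx$, so it inherits near-exact proportionality, and the distortion comparison reduces to a single-letter comparison at the two swapped positions. This is a case that can be checked directly.

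If the swap construction fails, I would fall back on the numerical search, or allow a larger code so that some competitor survives for every $(\by,\bz)$. Either way, the $\wt Q$ side is suppressed by the transmitted-codeword failure identified above.
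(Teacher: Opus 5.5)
Your proposal has the direction of the inequality reversed, so the mechanism you build cannot refute \eqref{eq:Bal-75}. That step claims the $\wh Q$-side (left) is \emph{at least} the $\wt Q$-side (right). A counter-example therefore needs the right-hand side to be strictly \emph{larger}.

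Your design runs the other way. Under $\wh Q$ the transmitted word always survives. Under $\wt Q$ it tends to fail \eqref{eq:Bal-restriction}, so ``the right-hand side loses mass''. Add a competitor that keeps the left-hand side positive, and the left-hand side exceeds the right-hand side, which is exactly what \eqref{eq:Bal-75} asserts. In the extreme case where $\bx\notin\cC_n(\by,\bz)$ for every $\bz$ in the support of $\wt Q_n(\cdot|\by)$ (the situation of Lemma~\ref{lem:false-76}), the right-hand side is zero and \eqref{eq:Bal-75} holds trivially.

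The transmitted-word failure you identified is what breaks \cite[Eq.~(76)]{Balakirsky1995}, not Eq.~(75). Correspondingly, the ``main obstacle'' you name (keeping the left-hand side positive) and the swap construction in step (iii) are aimed at the wrong target.

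What is missing is the observation that \eqref{eq:Bal-restriction} is not monotone in the set of selected coordinates. Since $\wt Q_y(1)\geq\wh Q_{x,y}(1)$, the $\by$-only selector selects more coordinates. This can make a \emph{competitor's} relative counts more balanced, so the competitor survives more often. You need an example in which the transmitted word still survives under $\wt Q$ (so the right-hand side loses nothing), while the competitor survives for more outputs than under $\wh Q$.

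The paper's example uses the following ingredients:
\begin{itemize}
\item $n=8$, $P=(\tfrac12,\tfrac12)$ and $\cY=\{0,1,2,3\}$;
\item $V_0=V_1$ uniform, $W_0=(0,0,\tfrac12,\tfrac12)$ and $W_1=(\tfrac12,\tfrac12,0,0)$, so that $\wh Q$ is deterministic and $\wt Q$ selects every coordinate;
\item $q\equiv1$, so every surviving competitor ties, and ties are errors;
\item the code $\{00001111,01101001\}$.
\end{itemize}
The transmitted word is constant on each half, and each output symbol appears once per half, so it survives under both selectors. Counting permutations then gives $64/576$ on the left versus $128/576$ on the right.

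A computer search could also find such an example, but only if it is directed at making the right-hand side exceed the left. Your structural guidance pushes toward the opposite inequality.
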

\begin{proof}
	The proof is somewhat lengthy and uses a different (non-asymptotic) counter-example than the one from Section \ref{sec:summary_ce}, so it is deferred to Appendix \ref{app:pf-75}.
\end{proof}

	{\bf Statement and refutation of \cite[Eq.~(76)]{Balakirsky1995}.} 
	We give a second concrete failure in the proof of
	\cite[Proposition~6.5]{Balakirsky1995}, which appears to be  an even more significant obstruction than the first.  
	Unlike the failure of \cite[Eq.~(75)]{Balakirsky1995}, it is based on the same choices of $P$, $(W,q)$ and $V$ from the earlier parts (i.e., from Sections~\ref{sec:summary_ce}--\ref{sec:further-useful}), though superposition coding does not play a role here.
	
	As noted above, Balakirsky initially uses the
	$(\bx,\by)$-dependent selector $\wh Q_n$ in
	\eqref{eq:hatQ-one}--\eqref{eq:Qhat-n}, and subsequently replaces it by the
	$\by$-only selector $\wt Q_n$ defined via \eqref{eq:tildeQ-single}, i.e., $\wt Q_y(z)=\wh Q_{x(y),y}(z)$.  
	After this replacement, \cite[Eq.~(76)]{Balakirsky1995} claims that one can
	choose $\{\alpha_n\}$ satisfying \eqref{eq:alpha-65} such that
	\begin{equation}
		\lim_{n \to \infty} \bigg( \sum_{\by,\bz}
		V_n(\by|\bx)\wt Q_n(\bz|\by)
		\1\{\bx\in\cC_n(\by,\bz)\} \bigg)
		\stackrel{\textrm{claimed}}{=} 1,
		\label{eq:Bal-76}
	\end{equation}
	where $\cC_n(\by,\bz)$ is the $\Delta V$-restricted code in
	Definition~\ref{def:restriction}.  That is, the transmitted codeword is claimed to remain in the restricted code with probability approaching one under the $\by$-only selector. Under the original $(\bx,\by)$-dependent selector (see \eqref{eq:Qhat-n}), the analogous statement holds deterministically by construction, since the exact conditional type of the selector sequence forces the relevant empirical frequencies to match the prescribed $\Delta V$ proportions.  The justification given for \eqref{eq:Bal-76} is that it holds by a similar argument to \cite[Lemma~4.1]{Balakirsky1995}, which in turn is based on a standard argument via the method of types and concentration.
	
	We show that, for the parameters of the counter-example used throughout this paper, the limiting statement in \eqref{eq:Bal-76} not only fails, but the left-hand side eventually equals zero.
	
	\begin{lemma}
		\label{lem:false-76}
		Consider the choices of $P$, $(W,q)$, and $V$ in
		\eqref{eq:P-counterexample}, \eqref{eq:counterexample-Wq}, and
		\eqref{eq:test-channel-V}, respectively.  For every sequence $\bx$ of type $P$,
		and every sequence $\{\alpha_n\}$ satisfying $\alpha_n\to0$, it holds for all sufficiently large $n$ that\footnote{As with our other results, this implicitly restricts to the common subsequence of $n$ in which all relevant (conditional) types are well-defined.}
		\begin{equation}
			\sum_{\by,\bz}
			V_n(\by|\bx)\wt Q_n(\bz|\by)
			\1\{\bx\in\cC_n(\by,\bz)\}
			=0.
			\label{eq:Bal-76-zero}
		\end{equation}
		Consequently, the claim in \eqref{eq:Bal-76} is false. 
	\end{lemma}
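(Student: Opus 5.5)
The idea is to show that, for every $\by$ in the support of $V_n(\cdot|\bx)$ and every $\bz$ in the support of $\wt Q_n(\cdot|\by)$, the transmitted word $\bx$ violates \eqref{eq:Bal-restriction} for one particular triple $(a,b,y)$. Since the violation holds pointwise and deterministically, every summand in \eqref{eq:Bal-76-zero} vanishes, and no concentration argument is needed. The natural candidate for this triple is $a=0$, $b=1$, because the $\by$-only selector ignores $x_j$ and so picks up the coordinates with $x_j=1$ in proportions governed by $V_1$, not by $V_0-W_0$.

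\textbf{Step 1: the sets and $\Delta V_0$.} From \eqref{eq:counterexample-Wq} and \eqref{eq:test-channel-V} we have $V_1=(0.425,0.0001,0.5749)$ and $W_1=(0.1,0.1,0.8)$, so $\cY_1^+=\{0\}$. By \eqref{eq:Ysets}, $\cY_0^+=\{1,2\}$, and hence $x(1)=x(2)=0$. Using $V_0\approx(0.861088,0.063478,0.075434)$ and $W_0=(0.97,0.03,0)$, \eqref{eq:Delta-V} gives $\Delta V_0(1)=0.033478/(0.033478+0.075434)\approx 0.307$. From \eqref{eq:hatQ-one} and \eqref{eq:tildeQ-single}, $\wt Q_2(1)=1$ because $W_0(2)=0$.

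\textbf{Step 2: deterministic counts.} Fix $\by\in\T_V^n(\bx)$, so that the joint type of $(\bx,\by)$ is exactly $P\times V$.
\begin{itemize}
\item Every $\bz$ in the type class of $\wt Q_n(\cdot|\by)$ selects all positions with $y_j=2$. Therefore $k_{01}(2)=n p_1 V_1(2)$ exactly.
\item Trivially, $k_{01}(1)\le|\{j: x_j=1,\, y_j=1\}|=n p_1 V_1(1)$.
\end{itemize}
It follows that $k_{01}\ge n p_1V_1(2)>0$ and
\[
\frac{k_{01}(1)}{k_{01}}\le\frac{V_1(1)}{V_1(1)+V_1(2)}\approx 1.74\times10^{-4}.
\]

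\textbf{Step 3: violation.} Condition \eqref{eq:Bal-restriction} with $(a,b,y)=(0,1,1)$ requires $k_{01}(1)\ge k_{01}(\Delta V_0(1)-\alpha_n)$. Since $\alpha_n\to0$, we have $\Delta V_0(1)-\alpha_n>0.3$ for large $n$. Together with $k_{01}>0$, the bound from Step 2 then contradicts this requirement. Hence $\bx\notin\cC_n(\by,\bz)$ for every such pair $(\by,\bz)$, which proves \eqref{eq:Bal-76-zero}, and therefore \eqref{eq:Bal-76} fails.

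The main obstacle is conceptual rather than technical: one must identify the triple $(a,b,y)$ on which the restriction fails. Once the triple is found, the remaining subtlety is to make sure the bound is deterministic and not merely typical. That is handled by two facts: $\wt Q_2(1)=1$ forces the count $k_{01}(2)$ exactly, and the exact joint type caps $k_{01}(1)$. Numerically, one only needs to confirm the gap between $1.7\times10^{-4}$ and $0.307$.
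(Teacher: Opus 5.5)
Your proposal is correct and follows the paper's proof essentially step for step. You use the same violating triple $(a,b,y)=(0,1,1)$ and the same fact $\wt Q_2(1)=1$ (from $W_0(2)=0$) to force $k_{01}(2)=np_1V_1(2)$ exactly, and you obtain the same deterministic bound $k_{01}(1)/k_{01}\le V_1(1)/(V_1(1)+V_1(2))=1/5750$ to compare against $\Delta V_0(1)=999/3250$. The only difference is that the paper uses exact fractions, giving the gap $11482/37375$, while your decimal approximations are equally sufficient given how large the gap is.
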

	
	\begin{proof}
		For the channels $W$ and $V$ in \eqref{eq:counterexample-Wq} and \eqref{eq:test-channel-V} (with $\mathcal{Y} = \{0,1,2\}$), a direct calculation via the definitions in \eqref{eq:Y-plus} and \eqref{eq:Delta-V} gives
		\begin{equation}
			\cY_0^+=\{1,2\},
			\qquad
			\Delta V_0(1)=\frac{999}{3250}.
			\label{eq:76-Yplus-Delta}
		\end{equation}
		Moreover, the definition of $x(\cdot)$ in \eqref{eq:x(y)} gives $x(2)=0$.  We also observe that $W_0(2)=0$ while $V_0(2)>0$, and hence
		the definition of the $\by$-only selector distribution in
		\eqref{eq:hatQ-one} and \eqref{eq:tildeQ-single} gives
		\begin{equation}
			\wt Q_2(1)
			=\wh Q_{0,2}(1)
			=\frac{V_0(2)-W_0(2)}{V_0(2)}
			=1.
			\label{eq:76-Qtilde-2}
		\end{equation}
		Hence, for every $\bz$ in the support of $\wt Q_n(\cdot|\by)$, we have the following implication:
		\begin{equation}
			y_j=2 \quad\Longrightarrow\quad z_j=1.
			\label{eq:76-y2-selected}
		\end{equation}
		
		Fix any $\by\in\T_V^n(\bx)$ and any $\bz$ satisfying
		$\wt Q_n(\bz|\by)>0$.  We proceed to test whether the transmitted codeword itself,
		i.e., $\bxbar=\bx$, satisfies the restriction
		\eqref{eq:Bal-restriction}.  This condition is required to hold for all $(a,b,y)$, but we will identify a violation from just a single choice:
		\begin{equation}
			a=0,\qquad b=1,\qquad y=1.
		\end{equation}
		Although we focus on $y=1$, we still need to understand both $k_{01}(1)$ and $k_{01}(2)$ since they both contribute to the term $k_{01}$ on the right-hand side of \eqref{eq:Bal-restriction}.  Starting with $k_{01}(2)$, since $\by$ has exact conditional type $V$ given $\bx$, the number of
		coordinates having $(x_j,y_j)=(1,2)$ is exactly
		$n p_1V_1(2)$.  All of these coordinates are selected due to
		\eqref{eq:76-y2-selected}, and hence
		\begin{equation}
			k_{01}(2)
			=np_1V_1(2)
			=np_1 \cdot 0.5749.
			\label{eq:76-k012}
		\end{equation}
		On the other hand, no matter how many coordinates having
		$(x_j,y_j)=(1,1)$ are selected, we have the analogous upper bound
		\begin{equation}
			k_{01}(1)
			\leq np_1V_1(1)
			=np_1 \cdot 0.0001.
			\label{eq:76-k011}
		\end{equation}
		Since $\cY_0^+=\{1,2\}$, the definition of $k_{ab}$ in \eqref{eq:kab} gives
		$k_{01}=k_{01}(1)+k_{01}(2)$, and therefore
		\begin{align}
			\frac{k_{01}(1)}{k_{01}}
			&\leq
			\frac{V_1(1)}{V_1(1)+V_1(2)}
			\notag\\
			&=
			\frac{0.0001}{0.0001+0.5749}
			=\frac{1}{5750}.
			\label{eq:76-ratio-bound}
		\end{align}
		Combining this with \eqref{eq:76-Yplus-Delta} yields the deterministic lower
		bound
		\begin{align}
			\left|
			\frac{k_{01}(1)}{k_{01}}
			-\Delta V_0(1)
			\right|
			&\geq
			\frac{999}{3250}-\frac{1}{5750}
			\notag\\
			&=
			\frac{11482}{37375}
			=0.3072107\ldots
			\label{eq:76-fixed-gap}
		\end{align}
		or equivalently,
		\begin{equation}
			\big|k_{01}(1)-k_{01}\Delta V_0(1)\big|
			\geq
			\frac{11482}{37375}\,k_{01}.
			\label{eq:76-restriction-gap}
		\end{equation}
		Since $\alpha_n\to0$, for all sufficiently large $n$ we have $\alpha_n<\frac{11482}{37375}$.  For every such $n$, \eqref{eq:76-restriction-gap} violates the restriction
		\eqref{eq:Bal-restriction} with $(a,b,y)=(0,1,1)$, i.e., $\bx\notin\cC_n(\by,\bz)$ for every $\by\in\T_V^n(\bx)$ and every $\bz$ satisfying
		$\wt Q_n(\bz|\by)>0$.  Every indicator on the left-hand side of
		\eqref{eq:Bal-76} is therefore zero, proving
		\eqref{eq:Bal-76-zero}.
	\end{proof}
    
This proof reveals a major difference between the $(\bx,\by)$-dependent selector $\wh Q$ and the $\by$-only one $\wt Q$.  The definition of $\wh Q$ ensures that $\wh Q_{1,y}(1)=0$ whenever $y\in\cY_0^+$, so under $\wh Q$ the ``cross-count'' $k_{01}$ is zero for the transmitted codeword itself (see \eqref{eq:kab}), and \eqref{eq:Bal-restriction} holds trivially for $(a,b) = (0,1)$.  This is no longer the case after replacing $\wh Q_{x,y}$ by $\wt Q_y=\wh Q_{x(y),y}$.  In the above example, every coordinate with $y=2$ is selected regardless of the actual transmitted symbol, and the imbalance between $V_1(2)=0.5749$ and $V_1(1)=0.0001$ leads to a large discrepancy from the target distribution $\Delta V_0$.  Notably, this is not merely a concentration issue, as the probability under consideration eventually becomes exactly zero.

\subsection{Failure of \cite[Proposition 6.5]{Balakirsky1995} Itself} \label{sec:prop65-superposition}

The preceding subsection disproves two specific steps in the proof, but these failures do not in themselves disprove the asymptotic statement of \cite[Proposition 6.5]{Balakirsky1995}, claiming that one can set $\alpha_n$ satisfying \eqref{eq:alpha-65} to obtain
\begin{equation}
    \Upsilon\big(\{\Pe^{(n)}(V_n)\}\big) ~\substack{\textrm{claimed} \\ =}~ \Upsilon\big(\{\wh\Pe^{(n)}(V_n)\}\big),
    \label{eq:prop65-claim}
\end{equation}
where $V_n$ is the combinatorial channel in \eqref{eq:combinatorial-channel}, $\Upsilon(\cdot)$ is the asymptotic notation in \eqref{eq:upsilon}, and 
\begin{equation}
    \wh\Pe^{(n)}(V_n)
    =\max_{\bx\in\cC_n}\wh P_{\mathrm e}^{(n)}(\bx,V_n)
    \label{eq:Phat-max}
\end{equation}
with $\wh P_{\mathrm e}^{(n)}(\bx,V_n)$ being the codeword-dependent restricted error probability defined in \eqref{eq:Phat}.  In this subsection, we show that such a claim is also false:

\begin{lemma} \label{lem:upsilon_false}
	For the choices of $P$, $(W,q)$, and $V$ in \eqref{eq:P-counterexample}, \eqref{eq:counterexample-Wq}, and \eqref{eq:test-channel-V} respectively, there exists a  sequence of $P$-constant-composition codes (indexed by $n$) for which the identity in \eqref{eq:prop65-claim} is false for any sequence $\{\alpha_n\}$ satisfying \eqref{eq:alpha-65}. 
\end{lemma}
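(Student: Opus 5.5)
Since \eqref{eq:prop65-claim} is an equality of $\Upsilon$ values, it suffices to exhibit a code sequence with $\Upsilon(\{\Pe^{(n)}(V_n)\})=1$ but $\Upsilon(\{\wh\Pe^{(n)}(V_n)\})=0$. The natural candidate is the expurgated superposition code from Section~\ref{sec:summary_ce} at rate $0.1375$, which is already used in the proof of the first lemma of Section~\ref{sec:perm-false}. That lemma's Fano argument, \eqref{eq:Fano-exact-V}, gives $\liminf_n \Pe^{(n)}(V_n)\ge 1-I(P,V)/R>0$, so the left-hand side of \eqref{eq:prop65-claim} equals $1$. It therefore remains to show that $\wh\Pe^{(n)}(V_n)\to 0$ for every admissible $\{\alpha_n\}$.

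For the restricted error probability, we would use the chain \eqref{eq:Bal-steps}, but only its steps that do not depend on Proposition~6.5. The final ``$\le$'' step holds non-asymptotically, since restriction only removes competitors. Together with \eqref{eq:memoryless-to-exact-W}, this gives $\wh\Pe^{(n)}(W_n)\le\Pe^{(n)}(W_n)\to0$. The difficulty is the step $\wh\Pe^{(n)}(V_n)\lesssim\wh\Pe^{(n)}(W_n)$ (Proposition~6.7), which we do not want to trust.

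A more robust plan is to bound $\wh\Pe^{(n)}(\bx,V_n)$ directly. Fix $\bx$ and draw $\by\sim V_n(\cdot|\bx)$ and $\bz\sim\wh Q_n(\cdot|\bx,\by)$. The selector has exact conditional type $\wh Q$, so the selected coordinates have joint type $(x,y)$ with weights $P(x)[V_x(y)-W_x(y)]^+$. Outside the selected set, the remaining pairs have conditional type proportional to $\min\{V_x(y),W_x(y)\}$. The idea is then a coupling: replacing the outputs on the selected coordinates by outputs drawn so that the full pair $(\bx,\by')$ has conditional type $W$ should produce a random $\by'$ uniform on $\T_W^n(\bx)$. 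The hoped-for mechanism is as follows. For a restricted competitor $\bxbar$, the counts $k_{ab}(y)$ are within $\alpha_n k_{ab}$ of $k_{ab}\Delta V_a(y)$. Hence the change $d^n(\bxbar,\by)-d^n(\bxbar,\by')$ is approximately determined by $k_{ab}$ alone, and it is at least the change $d^n(\bx,\by)-d^n(\bx,\by')\approx n(d(P,V)-d(P,W))<0$ up to $o(n)$ terms. Any restricted competitor that beats $\bx$ under $\by$ would then nearly beat it under $\by'$, so that $\wh\Pe^{(n)}(V_n)$ is bounded by the $W_n$-error of the superposition code. That error vanishes exponentially, which would absorb the slack.

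The main obstacle is this middle step, which is essentially Proposition~6.7. If it cannot be made rigorous, we would fall back on a direct random-coding computation over the superposition ensemble. We would take a union bound over the restricted competitors, split into same-cloud satellites and other-cloud codewords, and use the fact that the restriction forces each competitor's joint type with $(\by,\bz)$ to lie near a specific set. We would then show that the resulting exponents, which are LM-type over that set with the marginal constraints, are positive at $(R_0,R_1)=(0.035,0.240)$. This last check would need numerical verification analogous to \cite{ScarlettCode2026}. We expect it to succeed because the restriction effectively reduces $V$-decoding to $W$-decoding, and the code is $W$-good. Finally, expurgation would keep the bound valid for the maximum over $\bx$ in \eqref{eq:Phat-max}.
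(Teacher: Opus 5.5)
\textbf{There is a genuine gap: you never establish that the restricted error vanishes.} Your Fano half is correct and matches the paper's argument for $\Upsilon(\{\Pe^{(n)}(V_n)\})=1$. The missing half, showing $\Upsilon(\{\wh\Pe^{(n)}(V_n)\})=0$, is the entire content of the lemma. Your main route, coupling the selected coordinates to a $W$-type output, is Balakirsky's Proposition~6.7 argument, and it fails as stated for three reasons.
\begin{itemize}
\item The counts $k_{ab}$ can be $o(n)$, so fluctuations of order $\sqrt{k_{ab}}$ need not be dominated by $\alpha_n k_{ab}$. Also, the error-causing competitor depends on $(\by,\bz)$, so fixed-pair concentration does not suffice (Section~\ref{sec:Prop6.7}).
\item An $o(n)$ slack in the distortion comparison is not ``absorbed'' by $\Pe^{(n)}(W_n)\to0$ unless you add a separate margin argument.
\item Even granting Proposition~6.7, it asserts its inequality only for \emph{some} admissible $\{\alpha_n\}$, whereas the lemma must hold for all of them. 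The restricted error is nondecreasing in the tolerance, so larger $\alpha_n$ is exactly the case left uncovered.
\end{itemize}

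\textbf{Your fallback is the right route, but its substance is left undone.} It is the route the paper takes. However, the reason you give for expecting it to work (``the restriction effectively reduces $V$-decoding to $W$-decoding'') is the contested Balakirsky heuristic, not a justification. The missing steps are these.
\begin{itemize}
\item[(i)] Observe that $V_n(\by|\bx)\wh Q_n(\bz|\bx,\by)=S_n(\by,\bz|\bx)$, where $S(y,0|x)=\min\{V_x(y),W_x(y)\}$ and $S(y,1|x)=[V_x(y)-W_x(y)]^+$. This makes the restricted error event a type event.
\item[(ii)] For wrong-cloud competitors, drop the restriction entirely. The ordinary $V^2$ cloud error already vanishes by \eqref{eq:wrong-cloud-certificate}.
\item[(iii)] For same-cloud competitors, the restriction is indispensable. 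The unrestricted same-cloud $V$-error cannot vanish, because the total is bounded away from zero by Fano while the wrong-cloud part vanishes. So you must:
\begin{itemize}
\item write the only nontrivial ($a=0$) condition as $|k_{0b}(1)-\frac{999}{3250}k_{0b}|\le\alpha_n k_{0b}$;
\item relax it, via $k_{0b}\le n$, to $|\E_{\wh P}[g_b]|\le2\alpha^*$ for a \emph{fixed} $\alpha^*>0$ (this also makes the code sequence independent of $\{\alpha_n\}$ and covers every admissible sequence at once);
\item derive the single-competitor exponent $\cE_{\alpha^*}$ by the method of types;
\item prove $\cE_{\alpha^*}>R_1=0.240$, which the paper does with an explicit dual certificate giving $\cE_{\alpha^*}\ge0.35870725-60.444\alpha^*$.
\end{itemize}
\end{itemize}
Until (iii) is carried out, nothing in your proposal shows that the restricted error vanishes.
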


Specifically, we will establish a scenario under which the superposition coding construction of Section~\ref{sec:summary_ce} violates this claim, due to $\Pe^{(n)}(V_n)$ being bounded away from zero (by \eqref{eq:Fano-exact-V}) while the restricted error probability $\wh\Pe^{(n)}(V_n)$ approaches zero.  
When studying $\wh\Pe^{(n)}(V_n)$, as is standard in random coding, we will initially focus on the ensemble average:
\begin{equation}
	\wh P_{\mathrm{e,ens}}^{(n)}(\bx,V_n) = \E\big[ \wh P_{\mathrm e}^{(n)}(\bx,V_n) \big], \label{eq:ensemble-avg}
\end{equation}
where the average is over all non-transmitted (competitor) codewords in the random superposition coding ensemble.  The upper bounds will be independent of the transmitted codeword $\bx$ and will thus still hold after averaging over it.  A standard expurgation argument will then convert from average error to maximal error as used in \eqref{eq:Phat-max}.

The ensemble-average error probability will be decomposed into two types:
\begin{itemize}
    \item {\bf Same-cloud errors:} There exists a satellite in the same cloud as the transmitted codeword that induces a higher or equal decoding metric (and survives the restriction \eqref{eq:Bal-restriction});
    \item {\bf Wrong-cloud errors:} There exists a codeword from a different cloud to the transmitted codeword that induces a higher or equal decoding metric  (and survives the restriction \eqref{eq:Bal-restriction}).
\end{itemize}
Most of the effort will be in understanding the former, with the latter being deferred to Section \ref{sec:wrong-cloud}.  

\subsubsection{Useful technical definitions}

Define an augmented single-letter channel $S$ from $\cX$ to $\cY\times\{0,1\}$ by
\begin{align}
    S(y,0|x) & =\min\{V_x(y),W_x(y)\},
    \label{eq:S-zero}\\
    S(y,1|x) & =[V_x(y)-W_x(y)]^+.
    \label{eq:S-one}
\end{align}
For each $(x,y)$, checking the cases $V_x(y) \ge W_x(y)$ and $V_x(y) \le W_x(y)$ separately gives
\begin{equation}
    S(y,0|x)+S(y,1|x)=V_x(y),
    \label{eq:S-Y-marginal}
\end{equation}
which implies that $S$ is a valid conditional distribution, and its $Y$-marginal (given $x$) is exactly $V$.  The significance of $S$ is that the distribution $\wh Q_{x,y}$ defined in \eqref{eq:hatQ-one}--\eqref{eq:hatQ-zero} can be written as
\begin{equation}
    \wh Q_{x,y}(z)=\frac{S(y,z|x)}{V_x(y)}
    \label{eq:S-hatQ}
\end{equation}
whenever $V_x(y)>0$.  Consequently, if $S_n$ denotes the combinatorial channel that is uniform on the exact conditional type class $\T_S^n(\bx)$, then
\begin{equation}
    S_n(\by,\bz|\bx)
    =V_n(\by|\bx)\wh{Q}_n(\bz|\bx,\by).
    \label{eq:Sn-factorization}
\end{equation}
Hence, the pair $(\by,\bz)$ appearing in the definition of $\wh P_{\mathrm e}^{(n)}(\bx,V_n)$ in \eqref{eq:Phat} is simply the output of the combinatorial channel $S_n$. 

For the channels $W$ and $V$ in \eqref{eq:counterexample-Wq} and \eqref{eq:test-channel-V}, a direct calculation gives that the sets $\cY_x$ from \eqref{eq:Y-plus} are
\begin{equation}
    \cY_0^+=\{1,2\}, \qquad \cY_1^+=\{0\},
    \label{eq:Y-plus-example}
\end{equation}
and the distribution $\Delta V_0$ from \eqref{eq:Delta-V} is
\begin{equation}
    \Delta V_0(1)=\frac{999}{3250},
    \qquad
    \Delta V_0(2)=\frac{2251}{3250},
    \qquad
    \Delta V_1(0)=1.
    \label{eq:Delta-example}
\end{equation}
The $a=1$ part of the restriction in \eqref{eq:Bal-restriction} is therefore trivial (i.e., the left-hand side is identically zero).  The only non-trivial condition concerns the relative frequencies of output symbols $1$ and $2$ among selected coordinates associated with $a=0$.

Recall that in Section \ref{sec:summary_ce} the superposition code is applied over \emph{pairs} of symbols.  Accordingly, we group the binary length-$n$ block into $m = n/2$ adjacent pairs, so that the original block length is $n=2m$.  For one pair, write
\begin{equation}
    A=(U,X^\dagger,Y^\dagger,Z^\dagger),
    \label{eq:A-definition}
\end{equation}
where the $(\cdot)^{\dagger}$ notation refers to considering a pair of each associated symbol:
\begin{equation}
	 X^\dagger=(X_1,X_2), \quad Y^\dagger=(Y_1,Y_2), \quad Z^\dagger=(Z_1,Z_2). \
\end{equation}
Let $\overline{\bX}$ denote a (random) competing codeword from the same cloud as $\bx$, and let $\overline X^\dagger=(\overline X_1,\overline X_2)$ 
denote a generic pair associated with that codeword.  For $b\in\{0,1\}$, define
\begin{align}
    g_b(A,\overline X^\dagger)
    & =\sum_{\ell=1}^2 \1\{X_\ell=0,Z_\ell=1,\overline X_\ell=b\}
    \left( \1\{Y_\ell=1\} -\frac{999}{3250} \1\{Y_\ell\in\{1,2\}\} \right).
    \label{eq:gb-definition}
\end{align}
The significance of this definition is seen via the following lemma relating it to the restriction condition \eqref{eq:Bal-restriction}.  Here and subsequently, notation such as $A_i$ and $\overline X_i^\dagger$ represents the $i$-th pair among $m$ pairs in the sequence.

\begin{lemma} \label{lem:gb}
	For each $b\in\{0,1\}$ and each transmitted codeword $\bx$ and competitor codeword $\overline{\bX}$, every realization of $(\by,\bz)$ in the support of the exact-type channel $S_{2m}(\by,\bz | \bx)$ satisfies
	\begin{equation}
		\sum_{i=1}^m g_b(A_i,\overline X_i^\dagger)
		= k_{0b}(1)-\frac{999}{3250}k_{0b},
		\label{eq:gb-count-interpretation}
	\end{equation}
	which corresponds to the quantity inside $|\cdot|$ in \eqref{eq:Bal-restriction} with $a=0$ and $y=1$.  
	Moreover, the $y=2$ part of the $a=0$ restriction in \eqref{eq:Bal-restriction} is equivalent to the $y=1$ part, so \eqref{eq:gb-count-interpretation} captures the full restriction for $a=0$ (with the case $a=1$ being trivial as noted above).
\end{lemma}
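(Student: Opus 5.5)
The plan is to prove \eqref{eq:gb-count-interpretation} by unfolding the pair structure into single coordinates and matching indicators term by term with the definitions \eqref{eq:kaby}--\eqref{eq:kab}. The sum over $i=1,\dots,m$ and $\ell\in\{1,2\}$ in $\sum_i g_b(A_i,\overline X_i^\dagger)$ is just a sum over all $n=2m$ coordinates $j$. So the left-hand side equals
\begin{equation*}
\sum_{j=1}^n \1\{x_j=0,z_j=1,\bar x_j=b\}\Big(\1\{y_j=1\}-\tfrac{999}{3250}\1\{y_j\in\{1,2\}\}\Big).
\end{equation*}
The right-hand side involves $k_{0b}(y)=\sum_{j:z_j=1}\1\{\bar x_j=b,y_j=y,x(y)=0\}$. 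By \eqref{eq:Y-plus-example} we have $x(1)=x(2)=0$, so $\1\{x(y_j)=0\}=1$ whenever $y_j\in\{1,2\}$. Also, $\cY_0^+=\{1,2\}$ gives $k_{0b}=k_{0b}(1)+k_{0b}(2)$.

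The key discrepancy is that $g_b$ carries the extra factor $\1\{x_j=0\}$, whereas $k_{0b}$ conditions on $x(y_j)=0$ and not on the transmitted symbol. The step that needs care is showing that, on the support of $S_{2m}(\cdot|\bx)$, every selected coordinate with $y_j\in\{1,2\}$ automatically has $x_j=0$. By \eqref{eq:S-one}, $z_j=1$ requires $S(y_j,1|x_j)=[V_{x_j}(y_j)-W_{x_j}(y_j)]^+>0$, that is, $y_j\in\cY_{x_j}^+$. For $y_j\in\{1,2\}=\cY_0^+$, the relation \eqref{eq:Ysets} gives $y_j\in\cY_1^-$, so $x_j=1$ is impossible. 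Hence $\1\{x_j=0,z_j=1\}=\1\{z_j=1\}$ on these coordinates. Coordinates with $y_j=0$ contribute zero to both sides. This establishes \eqref{eq:gb-count-interpretation}.

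For the second claim, consider the $y=2$ restriction. We have $k_{0b}(2)-k_{0b}\Delta V_0(2)=k_{0b}-k_{0b}(1)-k_{0b}(1-\Delta V_0(1))$, using $\Delta V_0(1)+\Delta V_0(2)=1$ from \eqref{eq:Delta-example}. This equals $-(k_{0b}(1)-k_{0b}\Delta V_0(1))$, so the two absolute values in \eqref{eq:Bal-restriction} coincide. For $y=0\notin\cY_0^+$, both $k_{0b}(0)=0$ (since $x(0)=1$) and $\Delta V_0(0)=0$, so that condition is trivial. The $a=1$ case was already noted to be trivial because $\Delta V_1(0)=1$ and $\cY_1^+=\{0\}$.

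The main (mild) obstacle is the support argument in the second paragraph. It is where the $(\bx,\by)$-dependent selector matters, and it must be applied coordinatewise via the exact conditional type.
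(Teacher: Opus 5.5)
Your proposal is correct and follows the paper's proof essentially step for step. Both arguments unfold the pairs into coordinates and use the exact-type support of $S_{2m}$ (where $S(y,1|1)=0$ for $y\in\{1,2\}$) to drop the factor $\1\{X_j=0\}$. Both then identify the two resulting sums with $k_{0b}(1)$ and $k_{0b}$ via $x(1)=x(2)=0$, and use $\Delta V_0(1)+\Delta V_0(2)=1$ to show that the $y=2$ condition is the negation of the $y=1$ condition. Your extra check that the $(a,y)=(0,0)$ condition is trivial fills in what the paper leaves implicit.
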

\begin{proof}
	See Appendix \ref{app:pf-gb}
\end{proof}

Next, define the empirical joint distribution across $m$ pairs by
\begin{equation}
    \widehat P_{A\overline X^\dagger}(a,\bar x^\dagger)
    =\frac{1}{m}\sum_{i=1}^m
    \1\{(A_i,\overline X_i^\dagger)=(a,\bar x^\dagger)\}.
    \label{eq:pair-empirical-distribution}
\end{equation}
We claim that for any fixed $\alpha^* > 0$, every competitor satisfying the restriction in Definition~\ref{def:restriction} with parameter $\alpha_{2m}\to0$ necessarily obeys the following whenever $m$ is large enough:
\begin{equation}
    \Big|\E_{\widehat P}[g_b(A,\overline X^\dagger)]\Big|
    \leq2\alpha^*,
    \qquad b\in\{0,1\},
    \label{eq:gb-tolerance}
\end{equation}
where $\E_{\widehat P}$ denotes expectation under \eqref{eq:pair-empirical-distribution}.  Indeed, \eqref{eq:gb-count-interpretation} and \eqref{eq:Bal-restriction} give $\big|\E_{\widehat P}[g_b(A,\overline X^\dagger)]\big| \leq \frac{\alpha_{2m}k_{0b}}{m} \leq 2\alpha_{2m}$, with the last step using $k_{0b}\leq n=2m$.  Since $\alpha_{2m}\to0$, we also have $\alpha_{2m}\leq\alpha^*$ for all sufficiently large $m$, yielding \eqref{eq:gb-tolerance}.  Hence, replacing the full restriction \eqref{eq:Bal-restriction} by \eqref{eq:gb-tolerance} is a harmless relaxation that can only increase the restricted same-cloud error probability that we are studying (\emph{cf.}, \eqref{eq:Phat}, \eqref{eq:ensemble-avg}, and its decomposition stated thereafter).

Define the ``difference of two-letter log-metrics'' as
\begin{equation}
    \psi(A,\overline X^\dagger)
    =\sum_{\ell=1}^2
    \Big(\log{q(\overline X_\ell,Y_\ell)}
    - \log {q(X_\ell,Y_\ell)} \Big),
    \label{eq:metric-increment}
\end{equation}
and observe that a competitor that causes a decoding error must satisfy
\begin{equation}
    \E_{\widehat P}[\psi(A,\overline X^\dagger)]\geq0.
    \label{eq:metric-type-condition}
\end{equation}
Moreover, for $(x,y,z)\in\cX\times\cY\times\{0,1\}$, let
\begin{equation}
    \phi_{xyz}(A)
    =\sum_{\ell=1}^2
    \1\{X_\ell=x,Y_\ell=y,Z_\ell=z\},
    \label{eq:phi-definition}
\end{equation}
and define
\begin{equation}
    r_{xyz}=2P(x)S(y,z|x).
    \label{eq:rxyz}
\end{equation}
Because every codeword has composition $P$ and $S_{2m}(\by,\bz|\bx)$ fixes the exact conditional type of $(Y,Z)$ given $X$, every empirical distribution arising from \eqref{eq:Sn-factorization} satisfies
\begin{equation}
    \E_{\widehat P}[\phi_{xyz}(A)]=r_{xyz}.
    \label{eq:phi-constraint}
\end{equation}

\subsubsection{Method of types analysis}

Introduce the reference distribution
\begin{equation}
    B(a,\bar x^\dagger)
    =Q_{UX^2}(u,x^\dagger)
    \Big( \prod_{\ell=1}^2 S(y_\ell,z_\ell|x_\ell) \Big)
    Q_{X^2|U}(\bar x^\dagger|u),
    \label{eq:B-reference}
\end{equation}
where $a=(u,x^\dagger,y^\dagger,z^\dagger)$.  For the constant choice of $\alpha^* > 0$ introduced in \eqref{eq:gb-tolerance}, let $\cF_{\alpha^*}$ be the set of joint distributions $\wt P_{A\overline X^\dagger}$ satisfying
\begin{align}
    \wt P_{UX^\dagger}
    & =Q_{UX^2},
    \label{eq:F-transmitted-marginal}\\
    \wt P_{U\overline X^\dagger}
    & =Q_{UX^2},
    \label{eq:F-competitor-marginal}\\
    \E_{\wt P}[\phi_{xyz}(A)]
    & =r_{xyz},
    \qquad \text{for all }(x,y,z),
    \label{eq:F-exact-S}\\
    \E_{\wt P}[\psi(A,\overline X^\dagger)]
    & \geq0,
    \label{eq:F-metric}\\
    |\E_{\wt P}[g_b(A,\overline X^\dagger)]|
    & \leq2\alpha^*,
    \qquad b\in\{0,1\}.
    \label{eq:F-restriction}
\end{align}
Moreover, define
\begin{equation}
    \cE_{\alpha^*}
    =\min_{\wt P\in\cF_{\alpha^*}}
    D(\wt P_{A\overline X^\dagger}\|B).
    \label{eq:Ealpha-definition}
\end{equation}
The following analysis based on standard method of types arguments \cite[Ch.~2]{CsiszarKorner1981} gives an exponential upper bound on the probability of a fixed competitor codeword surviving the restriction and being favored over the true codeword by the maximum-metric decoder.

\begin{lemma} \label{lem:types}
	Fix $(\boldsymbol U,\boldsymbol X^\dagger)$ of joint type $Q_{UX^2}$.  Draw $(\boldsymbol Y^\dagger,\boldsymbol Z^\dagger)$ (given $\bX^\dagger$) according to $S_{2m}$ in \eqref{eq:Sn-factorization} and independently (given $\boldsymbol U$) draw one competing same-cloud codeword $\overline{\boldsymbol X}^\dagger$ uniformly from its conditional type class corresponding to $Q_{X^2|U}$.  Let $\cE_{1,m}$ be the event that this competitor satisfies both \eqref{eq:Bal-restriction} and
	\begin{equation}
		q^{2m}(\overline{\boldsymbol X}^\dagger,\boldsymbol Y^\dagger)
		\geq
		q^{2m}(\boldsymbol X^\dagger,\boldsymbol Y^\dagger).
		\label{eq:E1m-definition}
	\end{equation}
	Then
	\begin{equation}
		\Pp[\cE_{1,m}\mid \boldsymbol U,\boldsymbol X^\dagger]
		\leq
		\exp\big(-m\cE_{\alpha^*}+O(\log m)\big).
		\label{eq:one-satellite-bound}
	\end{equation}
\end{lemma}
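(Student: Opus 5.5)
The plan is a standard method-of-types argument over the $m$ pair positions. Conditioned on $(\boldsymbol U,\boldsymbol X^\dagger)$, which has fixed joint type $Q_{UX^2}$, the randomness consists of $(\boldsymbol Y^\dagger,\boldsymbol Z^\dagger)$, drawn uniformly from the exact conditional type class $\T_S^{2m}(\bx)$, and $\overline{\boldsymbol X}^\dagger$, drawn uniformly from the conditional type class of $Q_{X^2|U}$ given $\boldsymbol U$. The two draws are independent given $(\boldsymbol U,\boldsymbol X^\dagger)$.

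I would group the outcomes by the joint pair-type $\wt P=\widehat P_{A\overline X^\dagger}$ from \eqref{eq:pair-empirical-distribution}. By Lemma \ref{lem:gb} together with the relaxation in \eqref{eq:gb-tolerance}, for $m$ large the event $\cE_{1,m}$ forces $\wt P$ to satisfy \eqref{eq:F-restriction}. The metric inequality \eqref{eq:E1m-definition} is exactly $\E_{\wt P}[\psi]\ge0$, which is \eqref{eq:F-metric}. The marginal constraints \eqref{eq:F-transmitted-marginal}--\eqref{eq:F-competitor-marginal} hold because $(\boldsymbol U,\boldsymbol X^\dagger)$ and $(\boldsymbol U,\overline{\boldsymbol X}^\dagger)$ have type $Q_{UX^2}$. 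The constraint \eqref{eq:F-exact-S} follows from \eqref{eq:phi-constraint}. So every realization in $\cE_{1,m}$ has pair-type in $\cF_{\alpha^*}$. The number of pair-types is polynomial in $m$, so it suffices to show that each fixed $\wt P\in\cF_{\alpha^*}$ occurs with probability at most $\exp(-mD(\wt P\|B)+O(\log m))$.

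For a fixed $\wt P$, I would handle the two uniform draws separately.

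\emph{Per-type bound.} The probability is the number of $(\by,\bz,\bar\bx)$ with joint type $\wt P$, divided by $|\T_S^{2m}(\bx)|\cdot|\T_{Q_{X^2|U}}(\boldsymbol U)|$.
\begin{itemize}
\item The numerator is at most $\exp\big(mH_{\wt P}(Y^\dagger Z^\dagger\overline X^\dagger|UX^\dagger)\big)$.
\item The first denominator factor is $\exp\big(2m H(S|P)-O(\log m)\big)$ in binary-symbol terms.
\item The second denominator factor is $\exp\big(mH(Q_{X^2|U}|Q_U)-O(\log m)\big)$.
\end{itemize}

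\emph{Entropy bookkeeping.} Here I must verify that the resulting exponent equals $D(\wt P\|B)$ exactly, not something weaker.
\begin{itemize}
\item Because of the marginal constraints, $D(\wt P\|B)$ reduces to $-H_{\wt P}(Y^\dagger Z^\dagger\overline X^\dagger|UX^\dagger)+\E_{\wt P}[-\log\prod_\ell S(Y_\ell,Z_\ell|X_\ell)]+\E_{\wt P}[-\log Q_{X^2|U}(\overline X^\dagger|U)]$.
\item The last term equals $mH(Q_{X^2|U}|Q_U)/m$ by \eqref{eq:F-competitor-marginal}.
\item The middle term is linear in the counts $\phi_{xyz}$. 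By \eqref{eq:F-exact-S} it equals $-\sum_{xyz}r_{xyz}\log S(y|x)=2H(S|P)$, since the single-letter counts are pinned by the exact type.
\end{itemize}
Combining these gives the per-type bound, and summing over polynomially many types gives \eqref{eq:one-satellite-bound}, with the minimum over $\cF_{\alpha^*}$ giving $\cE_{\alpha^*}$.

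\emph{Main obstacle.} The delicate point is the mismatch between the pair-level law and the exact binary-level type class $\T_S^{2m}(\bx)$. That type class is not a product over pairs and is not uniform over a pair-level conditional type class. I would handle this as above, by counting with the linear identity that the $S$-log-likelihood is fixed by $\phi$. Zero entries of $S$ (for example $W_0(2)=0$ makes $S(2,0|0)=0$) need care: types placing mass there have probability zero and $D=\infty$, so they are consistent with the bound. Rounding to exact types costs only $O(\log m)$.
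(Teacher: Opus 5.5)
Your proposal is correct and is essentially the paper's method-of-types argument: you show that each joint pair-type $\wt P\in\cF_{\alpha^*}$ occurs with probability at most $\exp(-mD(\wt P\|B)+O(\log m))$, then take a union over polynomially many types. The only difference is bookkeeping: you count triples directly against $|\T_S^{2m}(\bx)|\cdot|\T_{Q_{X^2|U}}(\bU)|$ and use \eqref{eq:F-exact-S} to pin the cross-entropy term at $2H(S|P)$ (it should read $\log S(y,z|x)$, not $\log S(y|x)$), whereas the paper first bounds the output type under the memoryless $S^2$, pays a polynomial factor for conditioning on the exact type, then bounds the competitor type by $I_{\wt P}(\overline X^\dagger;X^\dagger,Y^\dagger,Z^\dagger|U)$ and combines the two exponents via the chain rule for KL divergence.
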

\begin{proof}
	The proof is based on a standard method of types analysis, so is deferred to Appendix \ref{app:pf-types}.
\end{proof}

Since there are at most $e^{mR_1}$ incorrect satellites in the transmitted cloud, a union bound gives
\begin{equation}
    \Pp[\text{restricted same-cloud error}]
    \leq
    \exp\big(-m(\cE_{\alpha^*}-R_1)+O(\log m)\big).
    \label{eq:same-cloud-union}
\end{equation}
To establish that this approaches zero, it suffices to show that $\cE_{\alpha^*}$ is strictly higher than $R_1$ for all sufficiently large $m$.  The following subsection establishes that this is true.

\subsubsection{Exponential decay via a dual certificate} \label{sec:dual-cert}

The optimization problem in \eqref{eq:Ealpha-definition} is convex and can readily be solved using standard solvers.  We give such an approach in the code \cite{ScarlettCode2026}, but in the proof of the following lemma we instead form an explicit lower bound and a ``dual certificate'' for concreteness and to avoid strictly relying on numerical optimization (instead only using a single numerical evaluation).

\begin{lemma} \label{lem:dual-cert}
	Under choices of $P$, $(W,q)$, $V$, and $Q_{UX^2}$ in \eqref{eq:P-counterexample}, \eqref{eq:counterexample-Wq}, \eqref{eq:test-channel-V}, and \eqref{eq:QU}--\eqref{eq:QXU1} respectively, the optimization problem in \eqref{eq:Ealpha-definition} satisfies
	\begin{equation}
		\cE_{\alpha^*}
		\geq 0.35870725 - 60.444\alpha^* \label{eq:Ealpha-certificate}
	\end{equation}
	for every $\alpha^*>0$.
\end{lemma}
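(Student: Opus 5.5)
We propose to prove \eqref{eq:Ealpha-certificate} by Lagrangian (weak) duality for the convex program \eqref{eq:Ealpha-definition}. The constraints \eqref{eq:F-transmitted-marginal}--\eqref{eq:F-restriction} are all linear in $\wt P$. The only non-linear ingredient is that \eqref{eq:F-restriction} is a two-sided inequality, which we write as $\pm\E_{\wt P}[g_b]\le 2\alpha^*$. We attach multipliers as follows:
\begin{itemize}
	\item $\lambda\ge0$ to the metric constraint \eqref{eq:F-metric};
	\item real multipliers $\mu_b$ to the quantities $\E_{\wt P}[g_b]$, whose magnitude $|\mu_b|$ is charged against the $2\alpha^*$ slack;
	\item a real function $\nu(x,y,z)$ to the exact-type constraints \eqref{eq:F-exact-S};
	\item functions $\beta(u,x^\dagger)$ and $\gamma(u,\bar x^\dagger)$ to the marginal constraints \eqref{eq:F-transmitted-marginal}--\eqref{eq:F-competitor-marginal}.
\end{itemize}

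For any feasible $\wt P$ and any such multipliers, we have
\[
D(\wt P\|B) \ge D(\wt P\|B) - \E_{\wt P}\big[\lambda\psi + \textstyle\sum_b\mu_b g_b + \sum_{xyz}\nu_{xyz}\phi_{xyz} + \beta + \gamma\big] + \lambda\cdot 0 + \sum_{xyz}\nu_{xyz} r_{xyz} + \E_{Q}[\beta] + \E_Q[\gamma] - 2\alpha^*\textstyle\sum_b|\mu_b|.
\]
Minimizing over all distributions via the Gibbs variational formula $\min_{\wt P}\{D(\wt P\|B)-\E_{\wt P}[f]\} = -\log\E_B[e^{f}]$ gives the lower bound
\[
\cE_{\alpha^*} \ge -\log \sum_{a,\bar x^\dagger} B(a,\bar x^\dagger)\,e^{\lambda\psi+\sum_b\mu_b g_b+\sum\nu\phi+\beta+\gamma} + \sum\nu r + \E_Q[\beta+\gamma] - 2\alpha^*\textstyle\sum_b|\mu_b|.
\]
This is valid for every choice of multipliers, so it has the form $c_0 - 2\alpha^*(|\mu_0|+|\mu_1|)$. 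We therefore need explicit multipliers with
\[
c_0\ge 0.35870725 \quad\text{and}\quad 2(|\mu_0|+|\mu_1|)\le 60.444.
\]
The bound holds for every $\alpha^*>0$ because $\alpha^*$ enters only through the penalty term. Note that $\lambda$ can be restricted to be nonnegative, while the other multipliers are unconstrained because their constraints are equalities.

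To obtain the multipliers, we first solve the $\alpha^*=0$ version of \eqref{eq:Ealpha-definition} numerically, i.e., with $\E_{\wt P}[g_b]=0$ imposed as equalities, using the code \cite{ScarlettCode2026}. We then read off approximately optimal dual variables $(\lambda,\mu_b,\nu,\beta,\gamma)$ and round them to short decimal values. The exponent sum is finite: the support of $B$ is finite, consisting of $u\in\{0,1\}$, pairs $x^\dagger,\bar x^\dagger$ constrained by $Q_{X^2|U}$, and $(y,z)$ with $S(y,z|x)>0$. The dual objective at the rounded multipliers can therefore be evaluated in exact or interval arithmetic, and this single numerical evaluation certifies $c_0\ge0.35870725$. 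It is convenient, though not necessary, to absorb $\beta,\gamma$ into a reweighting of $Q_{UX^2}$ and $Q_{X^2|U}$, so the certificate involves only $\lambda$, $\mu_b$, and $\nu$.

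The main obstacle is making this certificate genuinely rigorous with rounded multipliers. The primal infimum may not be attained, due to the boundary effects from zeros in $W$ and $Q_{X^2|U}$ (for example $W_0(2)=0$ and $Q_{X^2|U=1}$ being a point mass). However, weak duality needs no attainment or constraint qualification, so any rounded multipliers still give a valid lower bound; only the sharpness of the constant depends on good rounding.

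Finally, we check that the resulting bound serves its purpose in \eqref{eq:same-cloud-union}. At $\alpha^*=0$ the certified exponent $0.3587$ exceeds $R_1=0.240$. Choosing $\alpha^*$ small, e.g.\ any $\alpha^*<0.0019$, keeps $\cE_{\alpha^*}>R_1$.
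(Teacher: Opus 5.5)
Your proposal is essentially the paper's proof: your Lagrangian/Gibbs-variational bound is exactly the Donsker--Varadhan inequality $D(\wt P\|B)\ge \E_{\wt P}[F]-\log\E_B[e^F]$ applied with $F=a+\bar a+\sum_{x,y,z}h_{xyz}\phi_{xyz}+\mu_0g_0+\mu_1g_1+s\psi$. Your $\beta,\gamma,\nu,\lambda$ are the paper's $a,\bar a,h,s$, the $g_b$ terms are charged $2\alpha^*|\mu_b|$, and $s\ge0$ handles the metric constraint. The one thing you leave unspecified is the certificate itself, which is what actually produces these particular constants: the paper gives explicit tables for $a,\bar a,h$ together with $\mu_0=-11.85$, $\mu_1=18.372$, $s=30.2$, so the slope is exactly $2(|\mu_0|+|\mu_1|)=60.444$ and a single numerical evaluation gives the intercept $0.35870725$.
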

\begin{proof}
	See Appendix \ref{app:pf-dual-cert}.
\end{proof}

This result implies that for sufficiently small $\alpha^* > 0$, it holds that
\begin{equation}
    \cE_{\alpha^*}
    \geq0.3587>R_1=0.240,
    \label{eq:Ealpha-above-R1}
\end{equation}
recalling this choice of $R_1$ from \eqref{eq:backed-off-rates}.  This implies that \eqref{eq:same-cloud-union} tends to zero exponentially fast.

\subsubsection{Wrong-cloud errors} \label{sec:wrong-cloud}

The preceding calculation handles competitors belonging to the same cloud as the transmitted codeword.  It remains to control competitors associated with a different cloud center.  Since Balakirsky's restriction only deletes competitors, it is enough to bound the ordinary wrong-cloud error probability for superposition coding over $V^2$.  We use the cloud-rate dual formula \eqref{eq:SC-dual-B0}, taking
\begin{equation}
    \rho=0.76, \qquad s=9.34,
    \label{eq:wrong-cloud-rho-s}
\end{equation}
and the following auxiliary costs:
\begin{equation}
    \begin{array}{c|rrrr}
        a(U,X^{\dagger}) &X^{\dagger} = 00&X^{\dagger} = 01&X^{\dagger} = 10&X^{\dagger} = 11\\ \hline
        U=0&0&-0.767&-0.767&0\\
        U=1&0&0&0&-1.341
    \end{array}.
    \label{eq:wrong-cloud-costs}
\end{equation}
At $R_1=0.240$, direct substitution into \eqref{eq:SC-dual-B0} (with the channel $W$ therein replaced by $V^2$) gives the following (see \cite{ScarlettCode2026} for numerical verification):
\begin{equation}
    R^{\rm SC}_0(0.76,9.34,a;0.240)
    =0.0360110002\ldots>R_0=0.035.
    \label{eq:wrong-cloud-certificate}
\end{equation}
Hence, the wrong-cloud error under $m$ independent uses of $V^2$ (i.e., $n$ independent uses of $V$) tends to zero exponentially fast.  Here we are interested in the combinatorial channel $V_{n}$, but the same conclusion still holds because conditioning the memoryless channel $V^n$ on its exact conditional type costs only a polynomial factor (similarly to \eqref{eq:type-prob-lower}). Moreover, imposing Balakirsky's restriction can only further reduce this error probability, and as a result, the restricted wrong-cloud error probability also tends to zero.

\subsubsection{Completion of the argument}

Combining the findings from Sections~\ref{sec:dual-cert} and \ref{sec:wrong-cloud}, we have established that the ensemble-average restricted error under Balakirsky's exact-type
$V_{2m}$ construction tends to zero exponentially fast.  Moreover, we have shown this
using a fixed $\alpha^*>0$ in \eqref{eq:gb-tolerance}, independent of the
particular sequence $\{\alpha_n\}$, with the actual restriction implying
\eqref{eq:gb-tolerance} for all sufficiently large $n$ whenever  \eqref{eq:alpha-65} holds.  It follows that there exists a deterministic
sequence of superposition codebooks, chosen independently of
$\{\alpha_n\}$, whose average error under this fixed relaxed restriction tends to zero.  

Expurgating a fixed fraction of messages then yields a single $P$-constant-composition code sequence with maximal error under the relaxed restriction tending to zero, without changing the asymptotic rate.  Since the original restriction is stronger, the maximal restricted error also tends to zero for every admissible sequence $\{\alpha_n\}$.

On the other hand, the ordinary exact-type $V_{2m}$ error for the resulting
code sequence is bounded away from zero by \eqref{eq:Fano-exact-V}.  Combining the preceding two findings gives
\begin{equation}
    \Upsilon\big(\{\Pe^{(n)}(V_n)\}\big)=1,
    \qquad
    \Upsilon\big(\{\wh\Pe^{(n)}(V_n)\}\big)=0,
    \label{eq:prop65-contradiction}
\end{equation}
which contradicts the claim in \eqref{eq:prop65-claim} and establishes Lemma \ref{lem:upsilon_false}.

\subsection{Other Potential Gaps in the Analysis} \label{sec:other}

Here we describe some further issues in the analysis in \cite{Balakirsky1995}, including some steps that are false as stated.  These issues are separate from the failures in Sections~\ref{sec:perm-false}--\ref{sec:prop65-superposition}; they may be more prone to local repair, and some of them are alleviated under additional restrictions on the rate or channel.

\subsubsection{Issues in the proof of \cite[Proposition~6.7]{Balakirsky1995}}
\label{sec:Prop6.7}

The statement of \cite[Proposition 6.7]{Balakirsky1995} is that, for some sequence $\alpha_n$ satisfying
\begin{equation}
	\alpha_n\to0,
	\qquad
	\alpha_n\sqrt{n}\to\infty,
	\label{eq:alpha-condition}
\end{equation}
the restricted decoding error probability under the combinatorial channel $W_n$ remains non-vanishing whenever the corresponding restricted decoding error probability under $V_n$ does, i.e., 
\begin{equation}
	\Upsilon\big(\{\wh P_e^{(n)}(V_n)\}\big) \leq
	\Upsilon\big(\{\wh P_e^{(n)}(W_n)\}\big), \label{eq:67-statement}
\end{equation}
where $\wh P_e^{(n)}(V_n)$ is defined in \eqref{eq:Phat-max} (via \eqref{eq:Phat}) and $\wh P_e^{(n)}(W_n)$ is defined analogously
via $\Delta W$ whose definition matches \eqref{eq:Delta-V} but with the roles of $V$ and $W$ interchanged.  

One of the main steps in the proof is a concentration argument relating to the restriction condition \eqref{eq:Bal-restriction} \cite[second step after Eq.~(84)]{Balakirsky1995}.  The relevant empirical counts are formed within competitor-dependent sub-blocks of size $k_{ab}$, and thus exhibit typical fluctuations of order $\sqrt{k_{ab}}$.  Hence, one would need $\alpha_n\sqrt{k_{ab}}\to\infty$ for the tolerance in \eqref{eq:Bal-restriction} to dominate these fluctuations.  The argument treats the conditions in \eqref{eq:alpha-condition} as sufficient for this purpose, and while this is true when $k_{ab}=\Theta(n)$, it can fail when $k_{ab}=o(n)$.

For an arbitrary deterministic codebook, there is no lower bound forcing the relevant positive $k_{ab}$ values to grow with $n$; for instance, codeword pairs differing in only a few coordinates can even yield $k_{ab}=1$.  Hence, the concentration argument does not appear to be justified as stated.  A possible repair would be to first expurgate the codebook (without changing the asymptotic rate) to enforce sufficiently large pairwise distances, and then prove that the selector retains a large enough fraction of the differing coordinates.  However, this additional step appears to be non-trivial, since the $k_{ab}$ values here are defined with respect to an error-causing competitor in \eqref{eq:Phat}, which may itself depend on the realized output and selector sequence.  Hence, fixed-pair concentration may be insufficient.

Another issue in the proof of  \cite[Proposition~6.7]{Balakirsky1995} is that it temporarily assumes $d(P,V)\leq d(P,W)-2\alpha_n d_{\max}$, and then states that $\alpha_n$ can be set to zero when formulating the final $\Upsilon$ result.  This does not by itself cover the boundary case $d(P,V)=d(P,W)$, since the same sequence must also satisfy \eqref{eq:alpha-condition}.  This is a seemingly minor issue that may be resolved by a perturbation or continuity argument.

We note that the overall statement of  \cite[Proposition 6.7]{Balakirsky1995} may still plausibly be true. %, particularly in the regime $R > I(P,V)$ relevant to the converse argument.

\subsubsection{Combinatorial approximation lemma \cite[Lemma~4.4]{Balakirsky1995} and its proof} \label{eq:perturbed}

The combinatorial approximation lemma \cite[Lemma~4.4]{Balakirsky1995} concerns the first and last steps in the overview \eqref{eq:Bal-steps}, for channels $V$ and $W$ respectively.  We highlight the following potential issues concerning the proof and the lemma itself.

{\bf The nearest-output comparison in \cite[Eq.~(42)]{Balakirsky1995}.} This step is relevant to both $V$ and $W$, but we use the notation $V$ for concreteness.  
	Fix $\delta_n > 0$, let $[V]$ be the set of conditional types within $\delta_n$ of $V$ in maximum absolute difference (i.e., $\ell_{\infty}$-norm), and let $\T_{[V]}^n(\bx) = \bigcup_{V' \in [V]} \T_{V'}^n(\bx)$.  
	In this part of the analysis, an output
	$\by\in\T_{[V]}^n(\bx)$ is mapped to a nearest output
	$\underline\by\in\T_{\underline V}^n(\bx)$ in Hamming distance, where $\underline V$ is the ``favorable
	perturbation'' in \cite[Eq.~(36)]{Balakirsky1995}: $\underline V_0(y)=V_0(y), \forall y$, and
	\begin{equation}
		\underline V_1(y)=
		\begin{cases}
			V_1(y)+(|\cY|-1)\delta_n, & y=0,\\
			V_1(y)-\delta_n, & y\neq 0
		\end{cases}
		\label{eq:V-underline}
	\end{equation}
	for some sequence $\{\delta_n\}$ approaching zero. The first inequality in
	\cite[Eq.~(42)]{Balakirsky1995} claims the following:
	\begin{equation}
		\text{correct decoding under $\by$} \stackrel{\mathrm{claimed}}{\implies} \text{correct decoding under $\by'$}. \label{eq:claimed-42}
	\end{equation}
	This turns out to be false in general since, although $\underline V_0=V_0$, the empirical distribution induced by $\by$ may differ slightly from $V_0$ due to fluctuations in the $n$ memoryless uses of $V$.  The required ``correction'' may reduce a competitor's distortion enough to induce an error.
	
	For example, take $n=16$, $\mathcal{Y} = \{0,1\}$, $P=(1/2,1/2)$, $V_0=V_1=(1/2,1/2)$,
	$\delta_{n}=1/4$, $d_0=(0,0)$, and $d_1=(0,1)$.  Then
	$\underline V_0=(1/2,1/2)$ and $\underline V_1=(3/4,1/4)$.  Consider the
	following two-codeword code and two outputs:
	\begin{equation}
		\begin{aligned}
			\bx&=00000000\,11111111,
			&\bxbar&=10000000\,01111111,\\
			\by&=11111000\,00001111,
			&\underline\by&=01111000\,00000011.
		\end{aligned}
		\label{eq:nearest-output-example}
	\end{equation}
	The conditional empirical distributions induced by $\by$ are $(3/8,5/8)$ (given $x=0$) and $(1/2,1/2)$ (given $x=1$), so
	$\by\in\T_{[V]}^{n}(\bx)$.  The output $\underline\by$ has exact
	conditional type $\underline V$, and its distance from $\by$ is the
	minimum possible: One change is needed in the $x=0$ sub-block and two in the $x=1$ sub-block.  Nevertheless, we have
	\begin{gather}
			d^{n}(\bx,\by)=4<5=d^{n}(\bxbar,\by),\\
			d^{n}(\bx,\underline\by)=2=d^{n}(\bxbar,\underline\by). 
	\end{gather}
	Hence, the nearest output creates a tie, which contradicts \eqref{eq:claimed-42} since ties are counted as errors.
	
	While the above example concerns finite $n$, the same issue can persist for arbitrarily large $n$.  For even $s$, take $n=2s^4$ and $\delta_n=1/s$, so that $\delta_n\to0$ and $\delta_n\sqrt n\to\infty$.  Let $\bx$ again have all 0s in the first half and all 1s in the second half, and choose $\by$ to have $s^4/2+1$ output-1 symbols in the first half and $s^4/2$ of them in the second half.  Choose the two differing codeword coordinates at an output-1 position in the first half and an output-0 position in the second half.  A nearest output in $\T_{\underline V}^n(\bx)$ can then flip the former symbol together with $s^4\delta_n=s^3$ output-1 symbols in the second half, again creating a tie.
	
	It is plausible that \cite[Eq.~(42)]{Balakirsky1995} could be suitably modified or restricted to mild conditions that preclude the above scenarios or similar counter-examples.

{\bf Comparison of three channels in \cite[Eq.~(39)]{Balakirsky1995}.} 
Following \cite[Proposition~5.1 and Eqs.~(37)--(40)]{Balakirsky1995}, Balakirsky considers three channels that are ``nearby'' with closeness parameter $\delta_n \to 0$ in the same sense as $V$ and $\underline{V}$ from \eqref{eq:V-underline}.  He invokes the ordinary converse for these channels and concludes (in \cite[Eq.~(39)]{Balakirsky1995}) that the corresponding $\Upsilon$ behaviors (see \eqref{eq:upsilon}) of the error probability are equal for the given code sequence.  This indeed holds when the fixed code rate is strictly above $I(P,K)$ (where $K \in \{V,W\}$ is the channel under consideration), since it is then eventually above the mutual information of all three channels.  However, the combinatorial approximation lemma is not restricted to this regime, and it is less clear whether the corresponding asymptotic ($\Upsilon$) equality is guaranteed for arbitrary codes at rates smaller than or equal to $I(P,K)$, since this may require sufficient robustness to perturbations that depend on $n$ through $\delta_n$.

{\bf Generality of the lemma itself.} While the above discussion concerns a specific proof step, we also note that the combinatorial approximation lemma itself may be less general than stated (regardless of the proof strategy).  For instance, at the capacity boundary with $\exp(nC + O(1))$ messages, the matched decoding error probability is at least $1/2 - o(1)$ for positive-dispersion channels by central limit theorem arguments \cite{PolyanskiyPoorVerdu2010}, but it can be shown that the combinatorial channel error probability may still approach zero.  However, this is a highly degenerate example, and it does not rule out the possibility of a version of the lemma that holds in the required generality.

{\bf Discussion.} The combinatorial approximation lemma is applied to both $V$ and $W$, and the regime of primary interest is $I(P,V) < R < I(P,W)$.  This restriction appears to alleviate the concern regarding \cite[Eq.~(39)]{Balakirsky1995} on the $V$ side, but not on the $W$ side.  Moreover, a suitable amendment to \cite[Eq.~(42)]{Balakirsky1995} would still be required.  Overall, we expect that the statement of \cite[Lemma~4.4]{Balakirsky1995} is less general than claimed, but it may still be true in the generality needed for its application.

\section{Conclusion}

While the numerical counter-example based on superposition coding in \cite{ScarlettEtAl2015} successfully refuted Balakirsky's converse \cite{Balakirsky1995}, it left open the question of which steps in his analysis may be flawed.  In this paper, we have established the following:
\begin{enumerate}[label=(\roman*)]
	\item The permutation lemma is itself false, as we established in Section \ref{sec:perm-false} using the superposition coding counter-example from \cite{ScarlettEtAl2015}.
    \item Two specific steps in the proof of \cite[Proposition 6.5]{Balakirsky1995} based on selector sequences and restricted codes are false, as we established in Section~\ref{sec:false-steps}.
	\item The conclusion of \cite[Proposition 6.5]{Balakirsky1995} is itself false for the same superposition coding counter-example as \cite{ScarlettEtAl2015}, as we established in Section \ref{sec:prop65-superposition}.
\end{enumerate}
We also raised other potential issues in Section \ref{sec:other}, though these may potentially be locally repairable.  Moreover, we do not claim that the (potential) flaws identified in this paper are necessarily exhaustive.

\appendix

\section{A Correction to  \cite[Lemma 1]{ScarlettEtAl2015}} \label{sec:our_correction}

In the work providing the numerical counter-example to the binary-input converse, \cite[Lemma 1]{ScarlettEtAl2015} gives a continuity bound for the binary entropy function.  The lemma states that for two distributions $Q,Q'$ on $\{0,1\}$ satisfying $|Q(0)-Q'(0)|\leq\delta$ and $\min_xQ'(x)\geq q_{\min}'$, it holds that
\begin{equation}
    |H(Q)-H(Q')|
    \stackrel{\textrm{claimed}}{\leq} \delta\log\frac{1-q_{\min}'}{q_{\min}'}.
    \label{eq:old-continuity}
\end{equation}
As stated, this is not valid when $Q$ moves from $Q'$ toward the boundary of the simplex.  For example, under the choices
\begin{equation}
    Q'=(0.2,0.8),
    \qquad
    Q=(0.1,0.9),
    \qquad
    \delta=0.1,
    \label{eq:entropy-counterexample}
\end{equation}
the left-hand side of \eqref{eq:old-continuity} is roughly $0.17532$, whereas the right-hand side is roughly $0.13863$. 

A corrected version is obtained by defining
$q_{\min}=\min_{x\in\{0,1\}}\min\{Q(x),Q'(x)\}$, and replacing $q'_{\min}$ in \eqref{eq:old-continuity} by this value:
\begin{equation}
    |H(Q)-H(Q')|
    \leq \delta\log\frac{1-q_{\min}}{q_{\min}}.
    \label{eq:correct-continuity}
\end{equation}
Indeed, the derivative of binary entropy is $\log((1-t)/t)$, and every point on the line segment between $Q$ and $Q'$ has both coordinates at least $q_{\min}$.

Fortunately, this correction does not affect the counter-example.  In the notation of \cite[Lemma~1]{ScarlettEtAl2015}, we have $q_{\min}\geq q_{\min}'-\delta$, so \eqref{eq:correct-continuity} is in turn upper-bounded by
$\delta\log\frac{1-q_{\min}'+\delta}{q_{\min}'-\delta}$ (assuming $\delta \in (0,q'_{\min})$).  For the two parameter choices used in the paper, the resulting total continuity penalties change as follows (see \cite{ScarlettCode2026} for numerical verification):
\begin{equation}
    \begin{array}{c|cc}
        q_{\min}' & \text{old value} & \text{corrected value}\\ \hline
        0.042 & 9.8153412\times10^{-5} & 9.8154034\times10^{-5}\\
        0.2   & 2.4260151\times10^{-5} & 2.4260308\times10^{-5}
    \end{array}.
    \label{eq:continuity-penalties}
\end{equation}
The rounded bounds used in the paper remain valid, and in particular the conclusion $\Clm\leq0.136900$ is unchanged.

\section{Proof of Lemma \ref{lem:wrong-step-75} (Counter-Example to \cite[Eq.~(75)]{Balakirsky1995})} \label{app:pf-75}

Let $\wc{P}_{\mathrm e}^{(n)}(\bx,V_n)$ denote the right-hand side of \eqref{eq:Bal-75}:
\begin{equation}
	\wc{P}_{\mathrm e}^{(n)}(\bx,V_n) := 
	\sum_{\by,\bz} V_n(\by|\bx)\wt{Q}_n(\bz|\by)
	\1\{\bx\in\cC_n(\by,\bz)\}
	\1\{d^n(\bx,\by)\geq D_n(\bx,\by|\bz)\}. \label{eq:Pe_tilde_def} 
\end{equation}
The notation $\wc{P}$ is not used in \cite{Balakirsky1995}, but it denotes precisely the second term in \cite[Eq.~(74)]{Balakirsky1995}.  Balakirsky defines $\wt{P}$ therein to be the sum of this $\wc{P}$ term and the probability that $\bx$ fails the restriction (thus counting it as an error), but we will not need $\wt{P}$ here.  We proceed in several steps.  

{\bf Specification of problem parameters.} 
Take $n=8$, $P=(1/2,1/2)$, $\mathcal{Y}=\{0,1,2,3\}$, and
\begin{equation}
	V_0=V_1=(1/4,1/4,1/4,1/4),
	\label{eq:finite-V}
\end{equation}
\begin{equation}
	W_0=(0,0,1/2,1/2), \qquad W_1=(1/2,1/2,0,0).
	\label{eq:finite-W}
\end{equation}
It follows readily from \eqref{eq:finite-V}--\eqref{eq:finite-W} that
$PV=PW=(1/4,1/4,1/4,1/4)$, and that $0 = I(P,V) < I(P,W) = \log 2$, so that the constraints \eqref{eq:bal-marginal}--\eqref{eq:bal-mutual-information} are satisfied.

We consider the all-zero distortion function, i.e., $d_x(y) = 0$ for all $(x,y)$, corresponding to $q(x,y) = 1$, and trivially ensuring that \eqref{eq:bal-distortion} holds.  While this choice is somewhat non-ideal in that it gives a mismatch capacity of zero, it is not ruled out by the assumptions in \cite{Balakirsky1995}, and it suffices for proving Lemma \ref{lem:wrong-step-75}.  We comment on some possible alternative choices as follows:
\begin{itemize}
	\item With only slightly extra effort, it can be shown that the following choice also serves as a counter-example for the same $(P,W,V)$:
	\begin{equation}
		d_0(y)=0, \qquad \big(d_1(0),d_1(1),d_1(2),d_1(3)\big)=(0,0,1,1).
		\label{eq:finite-distortion}
	\end{equation}
	This choice leads to a mismatch capacity of $\log 2$ nats/use (i.e., 1 bit/use), but it is still non-ideal since it violates the condition $d(P,V) \le d(P,W)$ from \eqref{eq:bal-distortion}, thus being inadmissible (though that condition was not used in the proof of \cite[Eq.~(75)]{Balakirsky1995}).
	\item The author's AI agent proposed a candidate counter-example that was claimed to be admissible and yield a positive mismatch capacity, but it used $n=10$, $|\mathcal{Y}| = 6$, and five codewords, thus being significantly more complicated.  The author judged that this downside significantly outweighs the benefit, so did not attempt to verify its validity.
\end{itemize}

%We use the distortion function (corresponding to $d_x(y) = -\log q(x,y)$) given by
%\begin{equation}
%	d_0(y)=0, \qquad \big(d_1(0),d_1(1),d_1(2),d_1(3)\big)=(0,0,1,1).
%	\label{eq:finite-distortion}
%\end{equation}
%	For example, \eqref{eq:finite-distortion} corresponds to
%	\begin{equation}
	%		q=\begin{bmatrix}1&1&1&1\\1&1&e^{-1}&e^{-1}\end{bmatrix}.
	%	\end{equation}
% While the all-zero distortion measure would also suffice for this proof, we prefer to use a non-trivial distortion function to ensure that a positive rate is at least attainable. In fact, it gives a mismatch capacity of $\log 2$ nats/use (i.e., 1 bit/use): For a $P$-constant-composition code, the transmitted codeword has distortion zero, while every distinct codeword has strictly positive distortion.

{\bf Simplifying the restriction conditions.} 
From \eqref{eq:finite-V}--\eqref{eq:finite-W} and the definition in
\eqref{eq:Y-plus}, we have
\begin{equation}
	\mathcal{Y}_0^+=\{0,1\}, \qquad \mathcal{Y}_1^+=\{2,3\},
	\label{eq:finite-Yplus}
\end{equation}
and hence
\begin{equation}
	x(y)=\begin{cases}0,&y\in\{0,1\},\\1,&y\in\{2,3\}.\end{cases}
	\label{eq:finite-xy}
\end{equation}
Using \eqref{eq:Delta-V}, we similarly obtain
\begin{equation}
	\Delta V_0=(1/2,1/2,0,0), \qquad \Delta V_1=(0,0,1/2,1/2).
	\label{eq:finite-Delta}
\end{equation}
Moreover, the $(x,y)$-dependent selector distribution in \eqref{eq:hatQ-one}--\eqref{eq:hatQ-zero} is deterministic in the sense that it assigns conditional probability 1 to a particular $z$ value depending on $(x,y)$:
\begin{equation}
	\big(\widehat Q_{0,y}(1)\big)_{y=0}^3=(1,1,0,0), \qquad
	\big(\widehat Q_{1,y}(1)\big)_{y=0}^3=(0,0,1,1).
	\label{eq:finite-Qhat}
\end{equation}
Combining this with \eqref{eq:finite-xy} and \eqref{eq:tildeQ-single}, we obtain the following for the $y$-dependent selector distribution:
\begin{equation}
	\widetilde Q_y(1)=\widehat Q_{x(y),y}(1)=1, \qquad y\in\mathcal{Y}.
	\label{eq:finite-Qtilde}
\end{equation}
In other words, the $y$-only selector always sets $z_j=1$.

Using  \eqref{eq:finite-Delta} and the definitions in \eqref{eq:kaby}--\eqref{eq:kab}, there are only two non-trivial cases in \eqref{eq:Bal-restriction}:
\begin{itemize}
	\item when $a = 0$ and $y \in \{0,1\}$, \eqref{eq:Bal-restriction} gives
	\begin{equation}
		\left|k_{0b}(0)-\frac{k_{0b}}{2}\right|\le\alpha_n k_{0b}, \qquad
		\left|k_{0b}(1)-\frac{k_{0b}}{2}\right|\le\alpha_n k_{0b}
		\label{eq:finite-restrict-0}
	\end{equation}
	\item when $a= 1$ and $y \in \{2,3\}$, \eqref{eq:Bal-restriction} gives
	\begin{equation}
		\left|k_{1b}(2)-\frac{k_{1b}}{2}\right|\le\alpha_n k_{1b}, \qquad
		\left|k_{1b}(3)-\frac{k_{1b}}{2}\right|\le\alpha_n k_{1b}
		\label{eq:finite-restrict-1}
	\end{equation}
\end{itemize}
since the other combinations of $(a,y)$ give a left-hand side of zero in \eqref{eq:Bal-restriction} so that the condition trivially holds.  Moreover, the relevant $k_{ab}$ values simplify as follows after removing terms that are zero:
\begin{equation}
	k_{0b}=k_{0b}(0)+k_{0b}(1), \qquad
	k_{1b}=k_{1b}(2)+k_{1b}(3), \qquad b\in\{0,1\}.
	\label{eq:finite-kab}
\end{equation}
The subsequent analysis holds for any $\alpha_n\in (0,1/2)$ (and $n=8$).

{\bf Specification of code and initial analysis.}
Consider the two-codeword constant-composition code
\begin{equation}
	\mathcal{C}_n=\{\bx,\bar{\bx}\}, \qquad
	\bx=00001111, \qquad \bar{\bx}=01101001,
	\label{eq:finite-code}
\end{equation}
and suppose that $\bx$ is transmitted.   
%Since the two codewords in
%\eqref{eq:finite-code} differ precisely in coordinates $2,3,6,7$, the distortion function in \eqref{eq:finite-distortion} gives
%\begin{equation}
%	d^n(\bar{\bx},\by)-d^n(\bx,\by)
%	=d_1(y_2)+d_1(y_3)-d_1(y_6)-d_1(y_7).
%	\label{eq:finite-dist-diff}
%\end{equation}
We describe the combinatorial channel law $V_n$, letting
$\boldsymbol{Y}\sim V_n(\cdot|\bx)$.  By \eqref{eq:finite-V} and the fact that $\bx=00001111$ is transmitted, we have that $(Y_1,\ldots,Y_4)$ is a uniformly random
permutation of $\mathcal{Y}$, and independently $(Y_5,\ldots,Y_8)$ is
another uniformly random permutation of $\mathcal{Y}$.  Hence,
$\boldsymbol{Y}$ is uniform over
\begin{equation}
	(4!)^2 = 24^2 = 576
	\label{eq:finite-total-outputs}
\end{equation}
possible output vectors.  Both selectors in
\eqref{eq:finite-Qhat} and \eqref{eq:finite-Qtilde} are deterministic
given $\boldsymbol{Y}$ (and given that $\bx$ is transmitted), so all probabilities below refer only to this  randomness in $\boldsymbol{Y}$.

For $h\in\{L,R\}$ and $y\in\mathcal{Y}$, let
$\pos_h(y)$ denote the unique position in the left ($L$) or right ($R$) 
length-four half at which output symbol $y$ occurs.

{\bf Analysis of the $(\bx,\by)$-dependent selector.}
We first characterize when the competitor $\bar{\bx}$ survives the restriction.

\begin{lemma} \label{lem:xy-selector}
	Under the deterministic $(\bx,\by)$-dependent selector in \eqref{eq:finite-Qhat}, the competitor $\bar{\bx}$ survives the restriction in Definition \ref{def:restriction} if and only if
	\begin{equation}
		\bar x_{\pos_L(0)}=\bar x_{\pos_L(1)}
		\qquad\text{and}\qquad
		\bar x_{\pos_R(2)}=\bar x_{\pos_R(3)}.
		\label{eq:finite-Qhat-survival}
	\end{equation}
\end{lemma}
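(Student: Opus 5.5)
We compute the selector and the counts $k_{ab}(y)$ for the competitor $\bar{\bx}$ directly. Under \eqref{eq:finite-Qhat}, with $\bx=00001111$ transmitted, a coordinate $j$ is selected ($z_j=1$) exactly when either $x_j=0$ and $y_j\in\{0,1\}$, or $x_j=1$ and $y_j\in\{2,3\}$. Since each half of $\by$ is a permutation of $\cY$, exactly four coordinates are selected: $\pos_L(0)$ and $\pos_L(1)$ in the left half, and $\pos_R(2)$ and $\pos_R(3)$ in the right half.

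By \eqref{eq:finite-xy}, the selected coordinates with $y_j\in\{0,1\}$ have $x(y_j)=0$. These are precisely the two left-half coordinates, so only they contribute to $k_{0b}(\cdot)$. Likewise, only the two right-half coordinates contribute to $k_{1b}(\cdot)$. Hence
\[
k_{0b}(0)=\1\{\bar x_{\pos_L(0)}=b\},\qquad k_{0b}(1)=\1\{\bar x_{\pos_L(1)}=b\},
\]
and analogously $k_{1b}(2)$ and $k_{1b}(3)$ are indicators involving $\bar x_{\pos_R(2)}$ and $\bar x_{\pos_R(3)}$. By \eqref{eq:finite-kab}, $k_{0b}$ is the sum of the first two indicators and $k_{1b}$ the sum of the last two.

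Next we check the conditions \eqref{eq:finite-restrict-0} for $a=0$; the case $a=1$ is symmetric via \eqref{eq:finite-restrict-1}. Fix $b$; there are three cases according to $k_{0b}\in\{0,1,2\}$.
\begin{itemize}
\item If $k_{0b}=0$, both sides of \eqref{eq:finite-restrict-0} are zero and the condition holds.
\item If $k_{0b}=2$, then $k_{0b}(0)=k_{0b}(1)=1$ and $|1-1|=0\le 2\alpha_n$, so the condition holds.
\item If $k_{0b}=1$, one of $k_{0b}(0),k_{0b}(1)$ is $1$ and the other is $0$, giving $|1-\tfrac12|=\tfrac12>\alpha_n$ because $\alpha_n<1/2$. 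The condition fails.
\end{itemize}
Therefore the $a=0$ conditions hold for both $b$ if and only if neither $b\in\{0,1\}$ gives $k_{0b}=1$. Since $k_{00}+k_{01}=2$, this is equivalent to $\bar x_{\pos_L(0)}=\bar x_{\pos_L(1)}$. The same reasoning for $a=1$ gives $\bar x_{\pos_R(2)}=\bar x_{\pos_R(3)}$.

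Finally, all other $(a,y)$ combinations in \eqref{eq:Bal-restriction} are trivially satisfied, as already noted before \eqref{eq:finite-kab}. Survival is therefore equivalent to the two equalities in \eqref{eq:finite-Qhat-survival}.

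The only delicate point is bookkeeping: we must confirm that no selected coordinate contributes to the "wrong" $a$. For instance, a right-half coordinate with $y\in\{0,1\}$ has $x_j=1$, so $\widehat Q_{1,y}(1)=0$ and it is not selected. This follows directly from the deterministic form of \eqref{eq:finite-Qhat}.
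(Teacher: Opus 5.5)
Your proof is correct and follows essentially the same route as the paper. You identify the four selected coordinates, reduce $k_{0b}(\cdot)$ and $k_{1b}(\cdot)$ to indicators of the competitor bits at $\pos_L(0),\pos_L(1)$ and $\pos_R(2),\pos_R(3)$, and show the restriction fails exactly when some $k_{ab}=1$, i.e., when the two relevant bits differ. Your case split on $k_{0b}\in\{0,1,2\}$ is a repackaging of the paper's split on whether the two bits match, and your explicit check that right-half coordinates with $y\in\{0,1\}$ are unselected spells out a step the paper leaves implicit.
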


\begin{proof}
	By \eqref{eq:finite-Qhat}, in the left half, the selector sets $z_j=1$
	precisely at the positions occupied by output symbols $0$ and $1$.
	Hence, and using $x(y)$ from \eqref{eq:finite-xy}, the relevant $k_{ab}(y)$ values (defined in \eqref{eq:kaby}) simplify as follows for each $b\in\{0,1\}$:
	\begin{equation}
		k_{0b}(0)=\mathbf{1}\{\bar x_{\pos_L(0)}=b\}, \qquad
		k_{0b}(1)=\mathbf{1}\{\bar x_{\pos_L(1)}=b\}.
		\label{eq:finite-k-first}
	\end{equation}
	If $\pos_L(0)$ and $\pos_L(1)$ correspond to different competitor bits, then one such bit is $0$ and one is $1$, giving (via \eqref{eq:finite-kab}) that $k_{0b}=k_{0b}(0)+k_{0b}(1)=1$ 
	for both values of $b\in\{0,1\}$, and making both left-hand sides in \eqref{eq:finite-restrict-0} equal $1/2$. Since $\alpha_n<1/2$, the restriction fails.  Conversely, if the two
	positions have the same competitor bit, then for one value of $b$ the
	two counts in \eqref{eq:finite-k-first} are both one, while for the
	other they are both zero, and \eqref{eq:finite-restrict-0} holds due to the left-hand sides being zero.
	Therefore, the $a=0$ part of the restriction holds if and only if $\bar x_{\pos_L(0)}=\bar x_{\pos_L(1)}$.
	
	Similarly, in the right half, \eqref{eq:finite-Qhat} sets $z_j=1$
	precisely at the positions occupied by symbols $2$ and $3$, and
	\begin{equation}
		k_{1b}(2)=\mathbf{1}\{\bar x_{\pos_R(2)}=b\}, \qquad
		k_{1b}(3)=\mathbf{1}\{\bar x_{\pos_R(3)}=b\}.
		\label{eq:finite-k-second}
	\end{equation}
	The same argument as above then shows that \eqref{eq:finite-restrict-1} holds if and
	only if $\bar x_{\pos_R(2)}=\bar x_{\pos_R(3)}$.  Combining these two conditions gives \eqref{eq:finite-Qhat-survival}.
\end{proof}

The competitor bits in the left half are
\begin{equation}
	(\bar x_1,\bar x_2,\bar x_3,\bar x_4)=(0,1,1,0).
	\label{eq:finite-first-bits}
\end{equation}
Among the $4! = 24$ permutations of the output symbols $(0,1,2,3)$ in this half, exactly
$2^3=8$ place $0$ and $1$ at positions having matching $\bar{x}$ values:
One factor of $2$ chooses whether this common $\bar{x}$ value is $0$ or $1$,
another factor of $2$ comes from assigning $0$ and $1$ to the corresponding two positions,
and the final factor of $2$ comes from assigning $2$ and $3$ to the remaining positions.

Similarly, in the right half,
\begin{equation}
	(\bar x_5,\bar x_6,\bar x_7,\bar x_8)=(1,0,0,1),
	\label{eq:finite-second-bits}
\end{equation}
and the same counting argument shows that exactly $8$ of the $24$
possible right-half permutations place output symbols $2$ and $3$ at
positions having matching competitor bits.

The two halves are independent under $V_n$, and $\bar{\bx}$ survives
the restriction for
\begin{equation}
	8\cdot 8=64
	\label{eq:finite-hat-survive}
\end{equation}
of the $24^2 = 576$ outputs in \eqref{eq:finite-total-outputs}.  Moreover, 
we claim that the transmitted word $\bx$ itself satisfies the restriction for every
output.  To see this, note that in the left half $\bx$ is all 0, and from \eqref{eq:finite-Qhat} the selector sets $z_j = 1$ precisely when $y_j \in \{0,1\}$, giving $k_{00}(0) = k_{00}(1) = 1$, and by a similar argument in the right half (with $\bx$ having all 1s), $k_{11}(2)=k_{11}(3)=1$, with the remaining relevant counts being zero.  Hence, the left-hand sides in \eqref{eq:finite-restrict-0}--\eqref{eq:finite-restrict-1} are all zero.

Under the all-zero distortion metric with ties counted as errors, all of the $64$ outputs corresponding to \eqref{eq:finite-hat-survive} cause a decoding error.  It follows that
\begin{equation}
	\widehat P_e^{(n)}(\bx,V_n)=\frac{64}{576}=\frac{1}{9}.
	\label{eq:finite-Pehat}
\end{equation}

%It remains to determine which of the $64$ outputs corresponding to
%\eqref{eq:finite-hat-survive} cause a decoding error.  Define
%\begin{equation}
%	A=d_1(Y_2)+d_1(Y_3), \qquad B=d_1(Y_6)+d_1(Y_7),
%	\label{eq:finite-AB}
%\end{equation}
%so that the distortion difference in \eqref{eq:finite-dist-diff} is
%$A-B$.
%
%For a surviving left-half permutation, Lemma~\ref{lem:xy-selector} combined with
%\eqref{eq:finite-first-bits} shows that output symbols $0$ and $1$
%either occupy positions $\{1,4\}$ or positions $\{2,3\}$. There are four such
%permutations of each kind.  In the first case, symbols $2$ and $3$
%occupy positions $\{2,3\}$, giving $A=2$ by
%\eqref{eq:finite-distortion}, whereas in the second case $A=0$.  Hence, among
%the eight surviving left-half permutations, four have $A=0$ and four
%have $A=2$.
%
%Similarly, among the eight surviving right-half permutations, four
%have symbols $2$ and $3$ at positions $\{6,7\}$, giving $B=2$, and
%four have them at positions $\{5,8\}$, giving $B=0$.  Hence, among the
%$64$ surviving outputs, $A-B$ equals zero for $32$ outputs, $-2$ for
%$16$, and $+2$ for $16$.  Since ties are counted as errors,\footnote{For the curious reader, we note that Lemma \ref{lem:wrong-step-75} also holds when ties are broken uniformly at random; the comparison $1/12<7/36$ to follow in \eqref{eq:pe_comparison} is then replaced by $1/18<1/9$.} this gives
%\begin{equation}
%	\widehat P_e^{(n)}(\bx,V_n)=\frac{32+16}{576}=\frac{1}{12}.
%	\label{eq:finite-Pehat}
%\end{equation}

{\bf Analysis of the $\by$-dependent selector.}
By \eqref{eq:finite-Qtilde}, this selector has $z_j=1$ at every
coordinate.  The transmitted word again satisfies the restriction for
every output: Within each half every output symbol occurs exactly once,
and $\bx$ is constant on each half, so the corresponding pairs of
counts in \eqref{eq:finite-restrict-0}--\eqref{eq:finite-restrict-1}
are equal.

We next determine when $\bar{\bx}$ satisfies the restriction.  Defining the quantities
\begin{equation}
	\nu_h=\bar x_{\pos_h(0)}-\bar x_{\pos_h(1)},
	\qquad
	\omega_h=\bar x_{\pos_h(2)}-\bar x_{\pos_h(3)},
	\qquad h\in\{L,R\},
	\label{eq:finite-DE}
\end{equation}
we have the following.

\begin{lemma}
	Under the preceding setup, for the deterministic $\by$-dependent selector in \eqref{eq:finite-Qtilde}, the competitor $\bar{\bx}$ survives the restriction in Definition \ref{def:restriction} if and only if
	\begin{equation}
		(\nu_R,\omega_R)=-(\nu_L,\omega_L).
		\label{eq:finite-DE-condition}
	\end{equation}
\end{lemma}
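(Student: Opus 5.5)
With the $\by$-only selector, $z_j=1$ for every $j$ by \eqref{eq:finite-Qtilde}. So the counts in \eqref{eq:kaby} become plain counts over the whole block. The plan is to write them in terms of $\pos_h(\cdot)$ and reduce the restriction to a parity-type condition on $\nu_h$ and $\omega_h$.

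Because each output symbol occurs exactly once in each half, for $y\in\{0,1\}$ (where $x(y)=0$) we have
\begin{equation*}
k_{0b}(y)=\1\{\bar x_{\pos_L(y)}=b\}+\1\{\bar x_{\pos_R(y)}=b\}.
\end{equation*}
The analogous formula holds for $k_{1b}(y)$ with $y\in\{2,3\}$. Using \eqref{eq:finite-kab}, both conditions in \eqref{eq:finite-restrict-0} are equivalent to
\begin{equation*}
\tfrac12\,|k_{0b}(0)-k_{0b}(1)|\le\alpha_n k_{0b}.
\end{equation*}
For $b=1$ the difference $k_{01}(0)-k_{01}(1)$ equals $\nu_L+\nu_R$. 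For $b=0$ the indicators of $\bar x=0$ are $1-\bar x$, so the difference equals $-(\nu_L+\nu_R)$. Moreover $k_{00}+k_{01}=4$, since all four positions carrying outputs $0$ or $1$ are counted once in total.

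\textbf{Sufficiency.} If $\nu_L+\nu_R=0$, both left-hand sides vanish and the $a=0$ restriction holds.

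\textbf{Necessity.} This step needs the most care, because $\alpha_n$ is only assumed to lie in $(0,1/2)$ and larger $k_{0b}$ could absorb a nonzero discrepancy. I would split on $|\nu_L+\nu_R|\in\{1,2\}$.
\begin{itemize}
\item If $|\nu_L+\nu_R|=2$, then one of $k_{0b}(0),k_{0b}(1)$ equals $2$ and the other equals $0$. Hence $k_{0b}=2$ and the left-hand side is $1>2\alpha_n$, so the restriction fails.
\item If $|\nu_L+\nu_R|=1$, then $k_{0b}(0)+k_{0b}(1)$ and $k_{0b}(0)-k_{0b}(1)$ have the same parity, so each $k_{0b}$ is odd. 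Since $k_{00}+k_{01}=4$, one of them equals $1$. For that value of $b$ the left-hand side is $1/2>\alpha_n\cdot 1$, so the restriction fails.
\end{itemize}
Thus the $a=0$ part holds if and only if $\nu_R=-\nu_L$.

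The identical argument applied to \eqref{eq:finite-restrict-1}, with outputs $\{2,3\}$ and $x(y)=1$, shows that the $a=1$ part holds if and only if $\omega_R=-\omega_L$. All other $(a,y)$ combinations are trivial, as noted before \eqref{eq:finite-kab}. Combining the two parts gives \eqref{eq:finite-DE-condition}.
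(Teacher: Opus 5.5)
Your proof is correct and follows essentially the same route as the paper. Like the paper, you express $k_{0b}(y)$ through $\pos_L,\pos_R$, identify $k_{01}(0)-k_{01}(1)=\nu_L+\nu_R$, split necessity into the cases $|\nu_L+\nu_R|=2$ and $|\nu_L+\nu_R|=1$, and repeat the argument for $\omega$ with $a=1$. The only cosmetic difference is in the case $|\nu_L+\nu_R|=1$: you use a parity argument together with $k_{00}+k_{01}=4$ to find a $b$ with $k_{0b}=1$, whereas the paper enumerates the four possible pairs $(k_{01}(0),k_{01}(1))$ and uses $k_{00}(y)=2-k_{01}(y)$.
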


\begin{proof}
	We first consider the $a=0$ part of the restriction.  Since the selector
	has $z_j=1$ at every coordinate, the definition of $k_{ab}(y)$ from \eqref{eq:kaby} becomes $k_{ab}(y)=\1\{x(y)=a\}\bigl(\1\{\bar x_{\pos_L(y)}=b\}+\1\{\bar x_{\pos_R(y)}=b\}\bigr)$, and combined with $x(y)$ from \eqref{eq:finite-xy}, this gives
	\begin{align}
		\nu_L+\nu_R
		&= \bigl(\bar x_{\pos_L(0)}-\bar x_{\pos_L(1)}\bigr)
		+ \bigl(\bar x_{\pos_R(0)}-\bar x_{\pos_R(1)}\bigr) \\
		&= k_{01}(0)-k_{01}(1).
	\end{align}
	In particular, $\nu_L+\nu_R=0$ if and only if
	$k_{01}(0)=k_{01}(1)$, i.e., the counts of output symbols $0$ and $1$
	at coordinates with competitor bit $b=1$ are equal.  Since each of the
	symbols $0$ and $1$ occurs exactly twice overall, we have
	\begin{equation}
		k_{00}(y)=2-k_{01}(y),\qquad y\in\{0,1\},
		\label{eq:finite-k-complement}
	\end{equation}
	so that the condition $k_{01}(0)=k_{01}(1)$ also implies
	$k_{00}(0)=k_{00}(1)$.  Therefore, when $\nu_L+\nu_R=0$, both the
	$b=0$ and $b=1$ parts of \eqref{eq:finite-restrict-0} (for $a=0$) hold with zero
	left-hand side.
	
	Conversely, suppose that $\nu_L+\nu_R\ne0$.  If
	$|\nu_L+\nu_R|=2$, then $(k_{01}(0),k_{01}(1))$ is either $(2,0)$ or
	$(0,2)$.  Hence, $k_{01}=k_{01}(0)+k_{01}(1)=2$, and the corresponding
	discrepancy in \eqref{eq:finite-restrict-0} is $1$, whereas the
	right-hand side is $2\alpha_n<1$.  If $|\nu_L+\nu_R|=1$, then
	$(k_{01}(0),k_{01}(1))$ is one of $(1,0)$, $(0,1)$, $(2,1)$, or
	$(1,2)$.  In the first two cases we have $k_{01}=1$, and the corresponding
	discrepancy is $1/2>\alpha_n$ (for $b = 1$).  In the latter two cases, we
	find from \eqref{eq:finite-k-complement}  that the complementary pair $(k_{00}(0),k_{00}(1))$ is $(0,1)$ or $(1,0)$, so that $k_{00}=1$ and again
	the discrepancy is $1/2>\alpha_n$ (for $b=0$).
	
	Combining the above cases, the $a=0$ part of the
	restriction holds if and only if $\nu_L+\nu_R=0$. 
	By the same argument, with output symbols $2$ and $3$ in place of $0$ and
	$1$, the $a=1$ part of the restriction holds if and only if $\omega_L+\omega_R=0$. 
	Combining these two conditions gives \eqref{eq:finite-DE-condition}.
\end{proof}

We now count the outputs satisfying \eqref{eq:finite-DE-condition}.
In each half, two positions have competitor bit zero and two have
competitor bit one.  We first consider the left half, and split into two possible cases:
\begin{itemize}	
	\item {\bf Case I: $(\nu_L,\omega_L)=(0,0)$.}  In this case, output symbols $0$ and $1$ are placed at
	the two positions having the same competitor bit, and likewise for
	symbols $2$ and $3$.  There are $2^3 = 8$ such permutations: The first factor chooses whether symbols $0$ and $1$
	are placed at the two positions with competitor bit zero or at the two
	positions with competitor bit one, and the other two factors order the two
	pairs of symbols within their respective positions.
	\item {\bf Case II: $(\nu_L,\omega_L)\in\{-1,1\}^2$.}  In this case, for any prescribed value
	of this pair, the signs determine which output symbol in $\{0,1\}$, and which one in $\{2,3\}$, is placed at a position with competitor bit one.  There are two choices
	for each of these, amounting to four left-half permutations for each prescribed pair in $\{-1,1\}^2$.
	\item {\bf Absence of other cases.} We claim that there are no other possibilities for $(\nu_L,\omega_L)$ beyond Cases I and II.  Indeed, if $\nu_L=0$, then symbols $0$ and $1$ occupy the two positions having the same competitor bit, forcing symbols $2$ and $3$ to occupy the two positions having the other competitor bit, giving $\omega_L=0$.  Similarly, if $\nu_L\ne0$, then symbols $0$ and $1$ occupy positions with different
	competitor bits, forcing the same to hold for symbols $2$ and $3$, and hence
	$\omega_L\ne0$.  
\end{itemize}
By analogous reasoning, the same counts also apply in the right half.

By \eqref{eq:finite-DE-condition}, if
$(\nu_L,\omega_L)=(0,0)$ then the right half must also have
$(\nu_R,\omega_R)=(0,0)$, giving $8^2$ relevant permutations (i.e., pairs of half-permutations).  
Otherwise, there are four possible values of
$(\nu_L,\omega_L)\in\{-1,1\}^2$, and for each one the value of
$(\nu_R,\omega_R)$ is fixed to be its negative, with four permutations
in each half.  Consequently, the total number of permutations satisfying \eqref{eq:finite-DE-condition} is
\begin{equation}
	8^2+4\cdot 4^2=128.
	\label{eq:finite-tilde-survive}
\end{equation}
Under the all-zero distortion metric with ties counted as errors, these 128 permutations all lead to an error, giving
\begin{equation}
	\wc{P}_e^{(n)}(\bx,V_n)=\frac{128}{576}=\frac{2}{9}.
	\label{eq:finite-Petilde}
\end{equation}

%It remains to consider the distortion difference among these $128$
%outputs.  For the $8^2 = 64$ outputs having
%$(\nu_L,\omega_L)=(\nu_R,\omega_R)=(0,0)$, the same calculation as that of the $(x,y)$-dependent selector shows
%that $A-B$ equals zero for $32$ outputs, $-2$ for $16$, and $2$ for
%$16$.
%
%For each of the remaining $64$ outputs satisfying
%\eqref{eq:finite-DE-condition}, both $\omega_L$ and $\omega_R$ are non-zero.
%Hence, in each half, symbols $2$ and $3$ occupy positions with different
%competitor bits.  In the left half, \eqref{eq:finite-first-bits} shows that
%positions $\{2,3\}$ both have competitor bit one, while positions $\{1,4\}$
%both have competitor bit zero.  Therefore, exactly one of the output symbols
%$2$ and $3$ lies in positions $\{2,3\}$, giving $A=1$.  Similarly,
%\eqref{eq:finite-second-bits} shows that positions $\{6,7\}$ both have
%competitor bit zero, while positions $\{5,8\}$ both have competitor bit one.
%Hence, exactly one of the output symbols $2$ and $3$ lies in positions
%$\{6,7\}$, giving $B=1$.  As a result, all of these $64$ outputs have $A-B=0$.
%
%Therefore, among the $128$ outputs in \eqref{eq:finite-tilde-survive},
%the distortion difference is zero for $96$, $-2$ for $16$, and $2$
%for $16$.  Combining these and recalling that ties are counted as errors, we obtain
%\begin{equation}
%	\wc{P}_e^{(n)}(\bx,V_n)=\frac{96+16}{576}=\frac{7}{36}.
%	\label{eq:finite-Petilde}
%\end{equation}

{\bf Wrapping up.} Combining \eqref{eq:finite-Pehat} and \eqref{eq:finite-Petilde} gives
\begin{equation}
	\widehat P_e^{(n)}(\bx,V_n)=\frac{1}{9}<\frac{2}{9}
	=\wc{P}_e^{(n)}(\bx,V_n), \label{eq:pe_comparison}
\end{equation}
which is the opposite direction to \eqref{eq:Bal-75}, thus completing the proof of Lemma \ref{lem:wrong-step-75}.

\section{Omitted Proofs from Section \ref{sec:prop65-superposition}}

\subsection{Proof of Lemma \ref{lem:gb} (Relation Between $g_b$ and Restriction Condition)} \label{app:pf-gb}

Summing the definition of $g_b$ in \eqref{eq:gb-definition} over the $m$ pair positions gives
\begin{equation}
	\sum_{i=1}^m g_b(A_i,\overline X_i^\dagger)
	= \sum_{j=1}^{2m}
	\mathbf{1}\{X_j=0,Z_j=1,\overline X_j=b\}
	\left( \mathbf{1}\{Y_j=1\}
	-\frac{999}{3250}\mathbf{1}\{Y_j\in\{1,2\}\} \right).
	\label{eq:gb-expanded}
\end{equation}
By the definition of $S$ in \eqref{eq:S-zero}--\eqref{eq:S-one} with the sets in \eqref{eq:Y-plus-example}, we have $S(y,1 | 1)=0$ for $y\in\{1,2\}$. Hence, for every realization in the support of $S_{2m}$, we have the following implication:
\begin{equation}
	Z_j=1,\quad Y_j\in\{1,2\}
	\quad\Longrightarrow\quad
	X_j=0.
	\label{eq:S-support-property}
\end{equation}
This implies that on the support of $S_{2m}$, the condition $X_j=0$ in \eqref{eq:gb-expanded} is redundant and can be removed, giving
\begin{equation}
	\sum_{i=1}^m g_b(A_i,\overline X_i^\dagger)
	= \sum_{j:Z_j=1}
	\1\{\overline X_j=b,Y_j=1\} -
	\frac{999}{3250} \sum_{j:Z_j=1} \1\{\overline X_j=b,Y_j\in\{1,2\}\}. \label{eq:two-sums}
\end{equation}
Then, using the fact that $x(1)=x(2)=0$ (due to \eqref{eq:x(y)} and \eqref{eq:Y-plus-example}), the first term in \eqref{eq:two-sums} is 
\begin{equation}
	\sum_{j:Z_j=1}\mathbf{1}\{\overline X_j=b,Y_j=1,x(1)=0\}=k_{0b}(1),
\end{equation}
by the definition of $k_{ab}(y)$ in \eqref{eq:kaby}, 
and the second summation similarly gives
\begin{equation}
	\sum_{j:Z_j=1}\mathbf{1}\{\overline X_j=b,Y_j\in\{1,2\}\}
	= k_{0b}(1)+k_{0b}(2)
	= k_{0b}.
\end{equation}
Substituting these identities into \eqref{eq:two-sums} gives \eqref{eq:gb-count-interpretation}.

It remains to establish the second claim regarding the $y=2$ part.  Using $k_{0b}=k_{0b}(1)+k_{0b}(2)$ and  $\Delta V_0(2)=1 - \Delta V_0(1)$ (since $\Delta V_0(0) = 0$),  we have
\begin{equation}
	k_{0b}(2)-\Delta V_0(2)k_{0b} = -( k_{0b}(1)-\Delta V_0(1)k_{0b} ),
\end{equation}
which shows that the absolute-value constraints in \eqref{eq:Bal-restriction} corresponding to $y=1$ and $y=2$ are identical as claimed.

\subsection{Proof of Lemma \ref{lem:types} (Method of Types Analysis)} \label{app:pf-types}

	Throughout the proof, we let $\wh{P}_{(\cdot)}$ denote the joint empirical distribution of the variables indicated in the subscript. 

Fix a compatible joint type $\wt P_{A\overline X^\dagger}$, where
$A=(U,X^\dagger,Y^\dagger,Z^\dagger)$. By the superposition coding construction described in Section \ref{sec:claimed},
the transmitted pair sequence $(\bU,\bX^\dagger)$ has exact type $Q_{UX^2}$,
and the competing sequence $\overline{\bX}^\dagger$ has exact conditional
type $Q_{X^2|U}$ given $\bU$. 

We first consider the marginal of $(U,X^\dagger,Y^\dagger,Z^\dagger)$ in the absence of $\overline{\bX}^\dagger$. If the pair channel were memoryless according to $m$ uses of $S^2$, standard conditional type-class bounds would give the following (e.g., similarly to \cite[Eq.~(A.12)]{ScarlettThesis}):
\begin{equation}
	\Pr\left[\wh P_{UX^\dagger Y^\dagger Z^\dagger}=\wt P_{UX^\dagger Y^\dagger Z^\dagger}\,\Big|\,(\boldsymbol U,\boldsymbol X^\dagger)\right]
	= \exp\left(-mD\big(\wt P_{Y^\dagger Z^\dagger|UX^\dagger}\big\|S^2\big|Q_{UX^2}\big)+O(\log m)\right).
	\label{eq:output-type-probability}
\end{equation}
The actual channel $S_{2m}$ is obtained by conditioning on the exact
single-coordinate conditional type of $(Y,Z)$ given $X$. For types
satisfying \eqref{eq:F-exact-S}, this conditioning changes
\eqref{eq:output-type-probability} by only a polynomial factor
(analogous to \eqref{eq:type-prob-lower}), and hence only impacts the $O(\log m)$ term in \eqref{eq:output-type-probability}.

Next, condition on a realization
$(\boldsymbol U,\boldsymbol X^\dagger,\boldsymbol Y^\dagger,
\boldsymbol Z^\dagger)$ of the specified type. The competing sequence
$\overline{\boldsymbol X}^\dagger$ is drawn uniformly from the
conditional type class corresponding to $Q_{X^2|U}$. A standard argument via the ratio of conditional type class sizes therefore gives the following  (e.g., similarly to \cite[Eq.~(A.17)]{ScarlettThesis}):
\begin{equation}
	\Pr\left[\wh P_{A\overline X^\dagger}=\wt P_{A\overline X^\dagger}\,\middle|\,\boldsymbol U,\boldsymbol X^\dagger,\boldsymbol Y^\dagger,\boldsymbol Z^\dagger\right]
	= \exp\left(-m I_{\wt P}(\overline X^\dagger;X^\dagger,Y^\dagger,Z^\dagger|U)+O(\log m)\right).
	\label{eq:competitor-type-probability}
\end{equation}
Using $\wt P_{U\overline X^\dagger}=Q_{UX^2}$, we have
%$Q_{X^2|U}=\wt P_{\overline X^\dagger|U}$, and hence
$I_{\wt P}\big(\overline X^\dagger;X^\dagger,Y^\dagger,Z^\dagger\mid U\big) = D\big(\wt P_{\overline X^\dagger|A}\big\|Q_{X^2|U}\big|\wt P_A\big)$, which implies
\begin{align}
	&D\big(\wt P_{Y^\dagger Z^\dagger|UX^\dagger}\big\|S^2\big|Q_{UX^2}\big) +
	I_{\wt P}\big(\overline X^\dagger;X^\dagger,Y^\dagger,Z^\dagger\mid U\big) \notag\\
	&\qquad = D\big(\wt P_{Y^\dagger Z^\dagger|UX^\dagger}\big\|S^2\big|Q_{UX^2}\big)
	+ D\big(\wt P_{\overline X^\dagger|A}\big\|Q_{X^2|U}\big|\wt P_A\big)\\
	&\qquad = D\big(\wt P_{A\overline X^\dagger}\big\|B\big),
\end{align}
where the last step uses the chain rule for KL divergence and the fact that $\wt{P}_{A\overline X^\dagger}$ and $B$ have matching $(U,X^{\dagger})$ marginals (namely $Q_{UX^2}$). 
Hence, the sum of the two exponents in  \eqref{eq:output-type-probability}--\eqref{eq:competitor-type-probability} is $D(\wt P_{A\overline X^\dagger}\|B)$, giving an overall probability of
\begin{equation}
	\exp\left(
	-mD(\wt P_{A\overline X^\dagger}\|B)
	+O(\log m)
	\right)
	\label{eq:joint-type-probability}
\end{equation}
associated with a single joint type $\wt{P}_{A\overline X^\dagger}$.  Since there are only polynomially many pair types, summing \eqref{eq:joint-type-probability} over the compatible types satisfying the necessary conditions \eqref{eq:F-metric}--\eqref{eq:F-restriction} (and also satisfying \eqref{eq:F-transmitted-marginal}--\eqref{eq:F-exact-S} by construction) proves \eqref{eq:one-satellite-bound}.

\subsection{Proof of Lemma \ref{lem:dual-cert} (Dual Certificate)} \label{app:pf-dual-cert}

We use the following well-known variational inequality, which holds for any real-valued function $F$ (e.g., see \cite[Theorem 4.6]{PolyanskiyWu2025}):
\begin{equation}
	D(P\|B)
	\geq \E_P[F]-\log\E_B[e^F].
	\label{eq:KL-variational}
\end{equation}
We apply this with
\begin{align}
	F(A,\overline X^\dagger)
	& =a(U,X^\dagger)+\bar a(U,\overline X^\dagger)
	+\sum_{x,y,z}h_{xyz}\phi_{xyz}(A) \notag\\
	&\quad
	+\mu_0g_0(A,\overline X^\dagger)
	+\mu_1g_1(A,\overline X^\dagger)
	+s\,\psi(A,\overline X^\dagger),
	\label{eq:F-dual}
\end{align}
where $s\geq0$.  We claim that the constraints defining $\cF_{\alpha^*}$ in \eqref{eq:F-transmitted-marginal}--\eqref{eq:F-restriction} give 
\begin{align}
	\cE_{\alpha^*}
	& \geq \E_{Q_{UX^2}}[a(U,X^\dagger)]
	+\E_{Q_{UX^2}}[\bar a(U,\overline X^\dagger)]
	+\sum_{x,y,z}h_{xyz}r_{xyz} \notag\\
	&\quad
	-2\alpha^*(|\mu_0|+|\mu_1|)
	-\log\E_B[e^{F(A,\overline X^\dagger)}].
	\label{eq:Ealpha-dual}
\end{align}
To see this, first note that for $\wt P\in\cF_{\alpha^*}$, it holds that
$\E_{\wt P}[F] =
\E_{\wt P}[a(U,X^\dagger)]
+\E_{\wt P}[\bar a(U,\overline X^\dagger)]
+\sum_{x,y,z}h_{xyz}\E_{\wt P}[\phi_{xyz}(A)]
+\mu_0\E_{\wt P}[g_0]
+\mu_1\E_{\wt P}[g_1]
+s\E_{\wt P}[\psi]$.
The first three terms reduce to the first three terms in \eqref{eq:Ealpha-dual} by the equality constraints defining $\cF_{\alpha^*}$. Moreover, $s\E_{\wt P}[\psi]\geq 0$ due to \eqref{eq:F-metric} and $s\geq0$, while
$\mu_b\E_{\wt P}[g_b]\geq -2\alpha^*|\mu_b|$ for $b\in\{0,1\}$ by the restriction constraints. Combining these facts with \eqref{eq:KL-variational} and minimizing over $\wt P\in\cF_{\alpha^*}$ yields \eqref{eq:Ealpha-dual}.

We consider the following explicit choices:
\begin{itemize}
	\item The auxiliary cost functions $a$ and $\bar a$ are chosen as follows (entries marked by a dash have zero probability under $Q_{UX^2}$ and are irrelevant):
	\begin{equation}
		\begin{array}{c|rrrr|rrrr}
			&\multicolumn{4}{c|}{a(U,X^\dagger)}
			&\multicolumn{4}{c}{\bar a(U,\overline X^\dagger)}\\
			&00&01&10&11&00&01&10&11\\ \hline
			U=0&0&-1.848&-1.848&-&0&-0.614&-0.614&-\\
			U=1&-&-&-&-3.189&-&-&-&-3.523
		\end{array}
		\label{eq:Ealpha-costs}
	\end{equation}
	\item The $h$ function is chosen as follows:
	\begin{equation}
		\begin{array}{c|rrrrrr}
			&h_{x,0,0}&h_{x,0,1}&h_{x,1,0}&h_{x,1,1}&h_{x,2,0}&h_{x,2,1}\\ \hline
			x=0&0&0&0.125&8.207&0&-3.642\\
			x=1&2.057&2.057&-18.655&0&3.669&0
		\end{array}.
		\label{eq:Ealpha-h}
	\end{equation}
	\item The remaining scalar parameters are chosen as follows:
	\begin{equation}
		\mu_0=-11.85,
		\qquad \mu_1=18.372,
		\qquad s=30.2.
		\label{eq:Ealpha-mus}
	\end{equation}
\end{itemize}
A direct evaluation of \eqref{eq:Ealpha-dual} then gives $\cE_{\alpha^*} 
\geq0.35870725 - 60.444\alpha^*$ (see \cite{ScarlettCode2026} for numerical verification), thus establishing \eqref{eq:Ealpha-certificate}.

\section*{AI Declaration}

The main goal of this paper was to identify at least one specific step in \cite{Balakirsky1995} that appears to be (i) incorrect, and (ii) significant enough that it is unlikely to admit a local repair.  This goal was pursued with substantial assistance from an AI agent (OpenAI ChatGPT, models 5.6 Sol and Astra).  It initially identified the issue discussed in Section~3.4.1, but after further analysis I was only convinced of (i) and not (ii).  Subsequent discussion identified the steps discussed in Section~\ref{sec:false-steps} as considerably stronger candidates for ``major flaws'' satisfying both (i) and (ii).

AI was then used to generate initial versions of several mathematical arguments and the initial version of the accompanying code (the latter using OpenAI Codex).  I checked all of these closely, substantially edited, expanded, and rewrote the proofs, formed the overall paper, and further improved the code through both AI interaction and manual editing.  As the sole author, I take responsibility for all claims in this paper and their correctness, and for the accompanying code.

\bibliographystyle{IEEEtran}
\bibliography{refs}

\end{document}